\newcommand\BibTeX{{\rmfamily B\kern-.05em \textsc{i\kern-.025em b}\kern-.08em
T\kern-.1667em\lower.7ex\hbox{E}\kern-.125emX}}
\newtheorem{definition}{Definition}
\newtheorem{theorem}{Theorem}
\newtheorem{lemma}{Lemma}
\newtheorem{remark}{Remark}
\newtheorem{col}{Corollary}
\begin{document}

\runningheads{H.~G.~Bafghi, et al.}{On The Secrecy of the Cognitive Interference Channel with Partial Channel States}

\articletype{RESEARCH ARTICLE}

\title{On The Secrecy of the Cognitive Interference Channel with Partial Channel States}

\author{Hamid G.~Bafghi$^{\dag}$, Babak Seyfe$^{\dag}$, Mahtab Mirmohseni$^{\ddag}$, Mohammad Reza Aref$^{\ddag}$}
\address{$\dag$ Information Theoretic Learning System Lab. (ITLSL), Department of Electrical Engineering, Shahed University, Tehran, Iran.\\
$\ddag$ Information Systems and Security Lab. (ISSL), Department of Electrical Engineering, Sharif University of Technology, Tehran, Iran.\\
E-mails: \{ghanizade, seyfe\}@shahed.ac.ir, \{mirmohseni, aref\}@sharif.edu.}

\begin{abstract}
The secrecy problem in the state-dependent cognitive interference channel is considered in this paper.
In our model, there are a primary and a secondary (cognitive) transmitter-receiver pairs,
in which the cognitive transmitter has the message of the primary one as side information.
In addition, the channel is affected by a channel state sequence which is estimated partially
at the cognitive transmitter and the corresponding receiver, separately.
The cognitive transmitter should cooperate with the primary one, and it
wishes to keep its message secure at the primary receiver.
The achievable equivocation-rate regions for this
channel are derived using two approaches: the binning scheme coding, and superposition coding. Then the outer
bounds on the capacity are derived and the results are extended to the Gaussian examples.
\end{abstract}

%\keywords{Cognitive interference channel, secrecy capacity, partial side information.}

\maketitle

\section{Introduction}\label{Int}
Interference channel, in which the intended signal for one receiver causes interference at the other receivers, is a basic
model to study the constraints on the practical communication networks~\cite{bibi2}.
The Cognitive Interference Channel
(CIC) is one case of the interference channels in which one of the transmitter-receiver pair, namely the
primary one, has the privileges to use the channel~\cite{bibi9,bibi82}.
The secondary transmitter-receiver pair, i.~e., the cognitive
one, uses the channel without causing problem for the primary one. In one approach, the cognitive transmitter
cooperates with the primary party by spending the cognition cost~\cite{bibi83}.
Although the capacity of this channel remains an open problem in general case,
many works studied the achievable rate region for this channel
\cite{bibi12, bibi13, bibi71, bibi72}. Under degradedness condition,~\cite{bibi12} derived the capacity for the
CIC. The achievable rate of~\cite{bibi83} is improved by~\cite{bibi13, bibi71, bibi72, bibi138}.

The nature of the interference channel causes to leak the information to unintended destinations. In the information
theory literature, secure communication between the transmission parties was first studied in~\cite{bibi28} by Shannon.
Afterwards, Wyner introduced the wiretap channel to model the secrecy problem in the physical layer~\cite{bibi7}.
Furthermore, he proposed the~\textit{Random Coding} to keep the sent message away from the eavesdropper.
This coding scheme is based on the fact that a receiver cannot decode any information more than its
channel capacity with low-enough error probability.
Recently, there has been a significant interest in the
secrecy of multi-users systems~\cite{bibi79} with a particular
emphasis on the secrecy of the CIC~\cite{bibi9, bibi31, bibi87}.
The works~\cite{bibi9, bibi31, bibi87} derived some equivocation-rate regions
for the CIC to show the trade off between the achievable rate and the secrecy level in this channel.

Modeling a time-varying channel, whose instantaneous parameters depend on a random state sequence, is
introduced and studied by Shannon in his landmark paper~\cite{bibi39}. Moreover, the knowledge of the random state
sequence, i. e., the Channel State Information (CSI) is assumed to be available at the transmitter in~\cite{bibi39}.
There are considerable research interests in studying the effect of the CSI in various channel models (see~\cite{bibi69} and the references therein).
Specifically, the capacity of a discrete memoryless point to point channel with non-causal CSI available at the
Transmitter (CSIT) is derived by Gel'fand and Pinsker~\cite{bibi22}, and it is extended to the Gaussian channel in~\cite{bibi30}.
The CIC with CSI available at the cognitive transmitter is studied in~\cite{bibi82, bibi47} and the equivocation-rate region on this model is derived by~\cite{bibi70}. Moreover, some works consider the impact of partial channel state
information on the capacity and performance of the cognitive radio~\cite{bibi135,bibi136}.

In this paper, we study the CIC with Partial Channel State information (CIC-PCSI).
The partial CSIs are assumed to be known non-causally at the cognitive
transmitter and the corresponding receiver (see Figure~\ref{fig:1}). Here,
the cognitive transmitter should mitigate its
interference at the primary receiver.
Furthermore, it wishes to keep its message confidential with respect to
the primary receiver.

The CIC-PCSI model can be motivated by the wireless sensor network
application with different sensor types~\cite{bibi12},
in which one sensor has a better sensing capability than the other one.
The simpler sensor provides one event to its corresponding destination, but the more capable sensor
which can sense two events, cooperates with the simpler sensor.
Since the more capable sensor senses
a vital event, it wishes to keep its message confidential at the destination of the simpler sensor.
Moreover, the channel is affected by a channel state sequence which is estimated at the more capable sensor and
its destination, separately~\cite{bibi41}. These estimated observations of the channel state sequence are not equal
to each other in general case.

We study the different effects of the CSIT and CSIR on two coding schemes to achieve the equivocation-rate region.
For this aim, we use the~\textit{Binning scheme}~\cite{bibi11, bibi10, bibi13} and the~\textit{Superposition Coding}~\cite{bibi9, bibi82, bibi70}
in CIC-PCSI. In the binning scheme, the cognitive transmitter, after rate splitting,
bins its message against the code-book of the primary one. Then, it superimposes its message on the primary transmitter's message and the channel state sequence. In the superposition scheme, the cognitive transmitter superimposes its message on the primary transmitter's message and the channel state sequence. In each scheme, random coding is used to guaranty the secrecy condition for the cognitive transmitter's message~\cite{bibi7}.
Then, the outer bounds on the capacity of the CIC-PCSI are proposed.
Moreover, we extend the results of two cases, i.e., binning scheme and superposition coding, to the Gaussian model, and it is shown that the cognitive transmitter can choose the best coding scheme to maximize the achievable equivocation-rate region.
In comparison of our model with the different ones in~\cite{bibi82, bibi11, bibi10, bibi13},
we consider secrecy constraints in the CIC. Since we assume that the primary transmitter's message is fully known at the cognitive transmitter,
the secrecy issue is considered for the cognitive transmitter's message (see the similar model in~\cite{bibi9}).
Furthermore, in compare with the model of~\cite{bibi31} and~\cite{bibi87}, the CSI knowledge enhances the cognitive transmitter to improve
the equivocation-rate region.

The rest of the paper is as follows. In Section~\ref{S2},
the channel model and some preliminaries and the definitions are
explained. In Section~\ref{S3}, the main results on the achievable
equivocation-rate region using the binning scheme are
proposed. Furthermore, in this section we derive the proper outer
bounds on the capacity, and extend the results to the Gaussian
case. In Section~\ref{S4}, using the superposition coding, we
derive the equivocation-rate region and an
outer bound on the capacity of the channel. Then, we extend the
results to the Gaussian channel as an example.
Finally, the paper is concluded in Section~\ref{S5}.
The proofs of the theorems are relegated to the appendices.

\begin{figure}
\centering
\epsfig{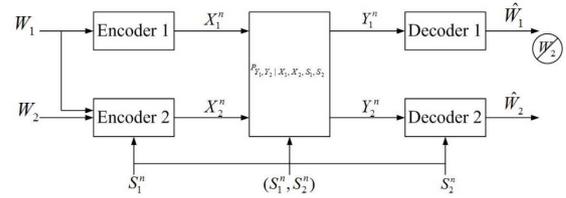}
\caption{The cognitive interference channel with two partial channel states information
available at the transmitter and the receiver, with a confidential message.}
\label{fig:1}
\end{figure}

%***********************************************************************************************************************************************

\section{Channel Model and Preliminaries}\label{S2}
\subsection{The Notation}
First, we explain the notation. We use~$\mathcal{X}$ to denote a finite alphabet with cardinality~$|\mathcal{X}|$. $x^{n}=\{x_{1}, x_{2}, \ldots, x_{n}\}$ represents the members of~$\mathcal{X}^{n}$, in which the subscripted and superlative letters represent the components and the vectors, respectively. $x^{j}_{i}$ is used to indicate the vector~$(x_{i},\ldots, x_{j})$. For the random vectors and the random variables, which are denoted by uppercase letters, a similar convention is used.

\subsection{Channel Model}
Consider a memoryless stationary state-dependent interference channel with finite input alphabets~$\mathcal{X}_{1}$ and~$\mathcal{X}_{2}$, finite output alphabets~$\mathcal{Y}_{1}$ and~$\mathcal{Y}_{2}$, the channel states alphabets~$\mathcal{S}_{1}$,~$\mathcal{S}_{2}$ with distribution~$\mathcal{P}_{S_{1}}$,~$\mathcal{P}_{S_{2}}$ and a conditional probability distribution~$P_{Y_{1},Y_{2}|X_{1},X_{2},S_{1},S_{2}}$.
As shown in the Figure~\ref{fig:1}, the~$t$-th transmitter, where~$t=1,2$, wishes to transmit the message~$W_{t}$ which is uniformly distributed on the set~$\mathcal{W}_{t}\in \{1,\ldots M_{t}\}$. The message~$W_{1}$ is assumed to be known at both transmitters, but the message~$W_{2}$ is just known at the transmitter 2 (the cognitive transmitter). Furthermore, it is assumed that the channel is dependent on two channel states. One of these channel states, i.e.,~$S_{1}$, is assumed to be known non-causally at the cognitive transmitter. The other one, i.e.,~$S_{2}$, is assumed to be known non-causally at the cognitive destination. Thus, the cognitive party wishes to increase its achievable rate using this side information.

Given the inputs and the states, i.e., the~$n$-sequences~$X^{n}_{1}$,~$X^{n}_{2}$,~$S^{n}_{1}$,~$S^{n}_{2}$, the conditional distribution of the channel outputs~$n$-sequences~$Y^{n}_{1}$,~$Y^{n}_{2}$ take the product form as follows
\begin{eqnarray}\label{eqn1}
\!\!\!\!\!\!\!\!\!\!\!\!\!\!\!\! &&P_{Y^{n}_{1}, Y^{n}_{2}|X^{n}_{1}, X^{n}_{2}, S^{n}_{1}, S^{n}_{2}}(y^{n}_{1}, y^{n}_{2}|x^{n}_{1}, x^{n}_{2}, s^{n}_{1}, s^{n}_{2})=\nonumber\\
\!\!\!\!\!\!\!\!\!\!\!\!\!\!\!\! &&\prod_{i=1}^{n} P_{Y_{1}, Y_{2}|X_{1}, X_{2}, S_{1}, S_{2}}(y_{1,i}, y_{2,i}|x_{1,i}, x_{2,i}, s_{1,i}, s_{2,i}).
\end{eqnarray}

\subsection{Definitions}
An~$(M_1,M_2, n, P_e)$-code has two encoding-decoding functions and an error probability.
The encoding functions are defined as
\begin{eqnarray}{}\label{eqn2}
&&\varphi_{1,n}: \mathcal{W}_{1}\rightarrow  \mathcal{X}_{1}^{n},\nonumber\\
&&\varphi_{2,n}: \mathcal{W}_{1}\times \mathcal{W}_{2}\times \mathcal{S}^{n}_{1}\rightarrow \mathcal{X}_{2}^{n},
\end{eqnarray}
and the channel decoders are defined by the mappings
\begin{eqnarray}{}\label{eqn3}
&&\psi_{1,n}: \mathcal{Y}_{1}^n\rightarrow \mathcal{\hat{W}}_{1},\nonumber\\
&&\psi_{2,n}: \mathcal{Y}_{2}^n\times \mathcal{S}^{n}_{2}\rightarrow \mathcal{\hat{W}}_{2}.
\end{eqnarray}

The error probability~$P_{e}= \max (P_{e,1}, P_{e,2})$ is defined as
\begin{eqnarray}{}\label{eqn4}
P_{e,t}= \sum_{w_{1}, w_{2}}\frac{1}{M_{1} M_{2}}P[\hat{w}_{t}\neq w_{t}| w_{1}, w_{2} ~ \textmd{were sent}].
\end{eqnarray}

\begin{definition}
The secrecy level of the cognitive transmitter's message at the primary receiver (receiver~1) is
measured by normalized equivocation-rate which is defined as
\begin{eqnarray}{}\label{eqn5}
R_{e_2}^{(n)}= \frac{1}{n} H(W_2|Y_1^n),
\end{eqnarray}
which is known as the ``weak secrecy condition''~\cite{bibi79}.
\end{definition}

\begin{definition}
The equivocation-rate-triple~$(R_{1},R_{2}, R_{e_2})$ is an achievable region if for any~$\epsilon_{n}> 0$ there exists an~$(M_{1}, M_{2}, n, P_{e})$ code such that~$M_i \geq 2^{nR_i}$,~$i=1,2$ for which we have~$P_{e}\leq\epsilon_n$, and
\begin{eqnarray}{}\label{eqn6}
0 \leq R_{e_{2}}\leq \liminf_{n\rightarrow \infty} R_{e_{2}}^{(n)}.
\end{eqnarray}
\end{definition}

\begin{definition}
The capacity region is the closure of the set of all achievable equivocation-rate regions.
\end{definition}

\subsection{Encoding Schemes}

Now, we discuss the rate achieving encoding schemes we will use in the CIC problem. First, consider a point-to-point
state-dependent communication system in which the CSI is known non-causally at the transmitter. Assume that the channel state
sequence~$S$ plays the role of the interference signal which can be considered as a code-book with rate~$R_S = I(Y ; S)$.
The transmitter wishes to transmit the message~$W$ at the rate~$R$ through the channel. There are two coding schemes
to achieve the rate region:~\textit{Superposition Coding} (SPC) and~\textit{Gel'fand-Pinsker Coding} (GPC); depending on the
interference's rate~$R_S$, either one may be chosen. When~$R_S$ is small, we can improve the achievable rate using the
SPC. For higher~$R_S$, we can achieve the rate using the classical GPC. The following lemma expresses the result
using these two coding schemes\cite[Lemma~1]{bibi13}. This lemma is used to derive the achievable rate regions for the
CIC-PCSI in the next sections.

\begin{lemma}{\cite[Lemma 1]{bibi13}}\label{lem1}
The following rate region is achievable for a point to point communication system with non-causal CSIT
\begin{eqnarray}{}\label{eqn7}
\!\!\!\! && R \leq \max_{P_{U|S}, f(.)}\min\{ I(X; Y |S),\nonumber\\
\!\!\!\!\!\!\!\! &&\max\{I(U, S; Y ) - R_S, I(U; Y ) - I(U; S)\}\}.
\end{eqnarray}
\end{lemma}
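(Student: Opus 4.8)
The plan is to prove the bound as an achievability (random-coding) result for a \emph{fixed} choice of the auxiliary law $P_{U|S}$ and the deterministic map $X=f(U,S)$, and then to maximize over $(P_{U|S},f)$ at the very end. For the fixed choice I would exhibit \emph{two} separate random codes that respectively achieve the two inner quantities $I(U;Y)-I(U;S)$ and $I(U,S;Y)-R_S$, so that the larger of the two is achievable; I would then argue that neither code can surpass the genie bound $I(X;Y|S)$, which is exactly what the outer $\min$ records.

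First I would treat the Gel'fand--Pinsker (binning) term. Generate about $2^{nI(U;Y)}$ codewords $u^n$ drawn i.i.d.\ from $\prod_i P_U$, and partition them into $2^{nR}$ equal bins, so each bin holds roughly $2^{n(I(U;Y)-R)}$ codewords. To send message $w$ the transmitter, observing $s^n$ non-causally, searches bin $w$ for a $u^n$ jointly typical with $s^n$ and transmits $x_i=f(u_i,s_i)$; by the covering lemma this search succeeds with high probability provided the bin size exceeds $2^{nI(U;S)}$. The decoder outputs the unique $u^n$ jointly typical with $y^n$, which the packing lemma makes reliable once the total codeword count is below $2^{nI(U;Y)}$. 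Combining the covering and packing constraints yields $R\le I(U;Y)-I(U;S)$, and since $U\to(X,S)\to Y$ one has $I(U;Y)-I(U;S)\le I(U;Y|S)\le I(X;Y|S)$, so this branch automatically respects the genie bound.

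Next I would handle the superposition term, in which the state sequence itself plays the role of a cloud centre drawn from an interference code-book of rate $R_S$. For each such $s^n$ the encoder generates $2^{nR}$ satellite sequences $u^n$ from $\prod_i P_{U|S}(\cdot\,|\,s_i)$, and, knowing the realized $s^n$, transmits $x_i=f(u_i,s_i)$ for the satellite indexed by $w$. The decoder performs joint-typicality decoding of the pair $(s^n,u^n)$ from $y^n$. A standard two-layer superposition error analysis, enumerating the wrong-satellite and wrong-cloud events, shows the message is recovered reliably whenever $R\le I(U;Y|S)$ and $R+R_S\le I(U,S;Y)$. The second constraint gives $R\le I(U,S;Y)-R_S$, while the first, together with $I(U;Y|S)\le I(X;Y|S)$, is absorbed into the outer $\min$.

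Finally, because the transmitter may select whichever code is better for the operating value of $R_S$ --- binning when $R_S$ is large, superposition when it is small --- the quantity $\max\{I(U,S;Y)-R_S,\,I(U;Y)-I(U;S)\}$ is achievable; intersecting it with the common ceiling $I(X;Y|S)$ and maximizing over $(P_{U|S},f)$ produces the claimed region. I expect the main obstacle to be the error analysis of the superposition scheme, where the channel state must be decoded as a cloud centre, so that $R_S$ becomes coupled to the message rate through the joint-typicality events and one must carefully verify that exactly the pair of constraints above survives. A secondary technical point is checking that the $I(X;Y|S)$ cap is tight under the joint optimization, i.e.\ that it is approached by choosing $U$ and $f$ so that $I(U;Y|S)$ saturates $I(X;Y|S)$, which is what justifies stating the outer bound in terms of $X$ rather than $U$.
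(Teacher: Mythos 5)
Your proposal is correct and takes essentially the same route as the paper, which itself only outlines the argument (Gel'fand--Pinsker binning for large $R_S$, superposition for small $R_S$) and defers details to \cite[Lemma 1]{bibi13}; your covering/packing analysis of the binning code and two-layer error analysis of the superposition code supply exactly the standard details behind that outline. The one point you flag yourself---that the superposition branch natively yields the cap $I(U;Y|S)$ rather than $I(X;Y|S)$---is indeed closed by taking $U=X$ (satellites drawn directly from $P_{X|S}$) in that branch, since $I(U,S;Y)=I(X,S;Y)$ for any $X=f(U,S)$, after which the identity $\min\{I(X;Y|S),\max\{T,G\}\}=\max\{\min\{I(X;Y|S),T\},\,G\}$ (valid because $G=I(U;Y)-I(U;S)\leq I(X;Y|S)$) shows your two-code maximum matches the stated region.
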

\begin{proof}[Outline of the proof]
For the case~$I(S;U,Y) \leq R_S \leq H(S)$, the binning scheme achieves the rate given by the
second term of~\eqref{eqn7}. For~$R_S \leq I(S;U, Y )$, SPC achieves the rate region by the first term in~\eqref{eqn7}.
For more details we refer to~\cite[Lemma 1]{bibi13}.
\end{proof}
\section{Using the Binning Scheme}\label{S3}
In this section we derive the achievable equivocation-rate region for the CIC-PCSI, shown in Figure~\ref{fig:1}, using the binning scheme.
Then, two outer bounds on the capacity are proposed, and the results are extended to the Gaussian channel as special case.

\subsection{An inner bound}
To derive an achievable rate region for this channel, we use the rate splitting as follows.
\begin{eqnarray}{}\label{eqn8}
R_{1}&=&R_{1a}+ R_{1b},\\\label{eqn9}
R_{2}&=&R_{2a}+ R_{2b},
\end{eqnarray}
for non-negative rates~$R_{1a}, R_{1b}, R_{2a}$ and~$R_{2b}$.
Transmitter~1, encodes the message~$W_1$ and uses the SPC with two code-books~$X^n_{1a}$ and~$X^n_{1b}$. Transmitter~2, by access to
the message~$W_1$ and the channel state~$S^n_1$ uses the SPC with two code-books~$X^n_{1a}$ and~$X^n_{1b}$. Then, it splits the
message~$W_2$ and uses GPC against~$X^n_{1a}$,~$X^n_{1b}$,~$S^n_1$ in two steps to create~$X^n_2$. In the first step, transmitter~2 uses
binning against~$X^n_{1a}$,~$X^n_{1b}$,~$S^n_1$ to create~$U^n$ of rate~$R_{2b}$. In the second step, it uses binning
against~$X^n_{1a}$,~$X^n_{1b}$ and~$S^n_1$ conditioned on~$U^n$ to create~$V^n$ of rate~$R_{2a}$. Finally,
it uses~$U^n, V^n, X^n_{1a}, X^n_{1b}, S^n_1$ to construct~$X^n_{2}$.
Based on this encoding scheme, we have the following result on the achievable equivocation-rate region.

\begin{theorem} [Achievable equivocation-rate region]\label{thm1}
The set of equivocation-rates $(R_{1a},R_{1b},R_{2a},R_{2b},R_{e_2})$ is achievable if it satisfies
\begin{eqnarray}{}\label{eqn10}
&&R_1 \leq I(X_1; Y_1,U|Q),\\\label{eqn11}
&&R_{1b} \leq I(X_{1b}; Y_1,U |X_{1a}, Q),\\\label{eqn12}
&&R_{2a} \leq I(V ; Y_2, S_2|U, Q) - I(V ;X_1, S_1|U, Q),\\\label{eqn13}
&&R_2 \leq I(V,U; Y_2, S_2|Q) - I(V,U;X_1, S_1|Q),\\\label{eqn14}
&&R_1 + R_{2b} \leq I(X_1,U; Y_1|Q),\\\label{eqn15}
&&R_{1b} + R_{2b} \leq I(X_{1b},U; Y_2|X_{1a}, Q),\\\label{eqn16}
&&R_{e_2} \leq I(V ; Y_2, S_2, U | Q) - I(V, S_1; X_1, Y_1,U | Q),\nonumber\\
&&
\end{eqnarray}
for input distribution factors as
\begin{eqnarray}{}\label{eqn17}
&&p(q)p(x_{1a}, x_{1b}, u, v, x_1, x_2, s_1, s_2|q)\nonumber\\
&&\times p(y_1, y_2|x_1, x_2, s_1, s_2),
\end{eqnarray}
in which the right-hand-sides (r.h.s.) of the equations~\eqref{eqn10}--\eqref{eqn16} are non-negative and~$Q$ is a time-sharing random variable.
\end{theorem}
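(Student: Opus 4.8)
The plan is to prove achievability by a layered random-coding construction: superposition coding carries the primary message in the two layers $X_{1a},X_{1b}$, a two-step Gelfand--Pinsker (conditional binning) scheme builds $U$ and then $V$ so as to pre-subtract the primary codewords and the non-causally known state $S_1$, and the binning overhead in the $V$-layer simultaneously serves as the randomization that confuses receiver~1. I would follow the standard four phases. In \emph{codebook generation}, after fixing the factorization~\eqref{eqn17} I build, conditioned on a time-sharing sequence $q^n$, an $X_{1a}^n$ codebook of rate $R_{1a}$, a superimposed $X_{1b}^n$ codebook of rate $R_{1b}$, a $U^n$ codebook of $2^{n(R_{2b}+R'_{2b})}$ sequences partitioned into $2^{nR_{2b}}$ message bins, and a conditional $V^n$ codebook of total rate $\tilde R_V$ partitioned into $2^{nR_{2a}}$ message bins, each containing enough codewords both to cover $(X_1,S_1)$ and to provide secrecy. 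In \emph{encoding}, transmitter~2 regenerates $x_{1a}^n,x_{1b}^n$ from the known $W_1$ and, in the bins indexed by $W_{2b}$ and $W_{2a}$, searches for $u^n$ and then $v^n$ jointly typical with $(x_{1a}^n,x_{1b}^n,s_1^n,q^n)$; by the covering lemma these searches succeed with high probability once the per-bin margins satisfy $R'_{2b}\ge I(U;X_1,S_1|Q)$ and the $V$-margin is at least $I(V;X_1,S_1|U,Q)$.

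For \emph{decoding and error analysis} I would invoke the packing lemma at each receiver. Receiver~1 jointly decodes $(X_1,U)$ from $Y_1^n$, which by the usual typicality error bounds produces~\eqref{eqn10},~\eqref{eqn11}, and the sum constraint~\eqref{eqn14}; receiver~2, exploiting its state side information $S_2^n$, decodes $(X_{1b},U)$ given $X_{1a}$ to yield~\eqref{eqn15}, and then decodes $V$ to yield~\eqref{eqn12} and~\eqref{eqn13}. The subtracted quantities $I(V;X_1,S_1|U,Q)$ and $I(V,U;X_1,S_1|Q)$ appearing in~\eqref{eqn12}--\eqref{eqn13} are exactly the Gelfand--Pinsker covering margins spent above. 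Collecting all error-event constraints and clearing the auxiliary margins $R'_{2b},R'_{2a}$ by Fourier--Motzkin elimination reproduces the rate inequalities of the theorem; the non-negativity of each right-hand side is taken as a hypothesis.

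The step I expect to be the main obstacle is the equivocation bound~\eqref{eqn16}. Because $W_{2b}$ is reliably decoded at receiver~1 under~\eqref{eqn14}, Fano's inequality lets me replace $H(W_2|Y_1^n)$ by $H(W_{2a}|Y_1^n)$ up to $o(n)$. I would then use a genie-aided argument that hands receiver~1 the decoded $(X_1^n,U^n)$ and lower-bound $H(W_{2a}\mid Y_1^n,X_1^n,U^n)$; expanding this through $V^n$, and using that $W_{2a}$ is a deterministic bin index of $V^n$, turns the equivocation into the gap between the full $V$-codebook rate (which, being decodable at receiver~2, equals $I(V;Y_2,S_2,U|Q)$) and the rate at which receiver~1 can localize the chosen $v^n$. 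The decisive subtlety is that the cognitive encoder picked $v^n$ inside its bin using its non-causal knowledge of $S_1^n$, so $V^n$ is correlated with receiver~1's observation \emph{through} $S_1$; this is exactly why the leakage term is $I(V,S_1;X_1,Y_1,U|Q)$ and not merely $I(V;X_1,Y_1,U|Q)$. Quantifying this correlation by a conditional-typicality (list-size) bound, and checking that the binning overhead used to generate equivocation is simultaneously consistent with the covering feasibility and with reliable decoding at receiver~2, is where the bulk of the technical care lies and yields~\eqref{eqn16}.
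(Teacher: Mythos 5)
Your proposal follows essentially the same route as the paper's Appendix~\ref{A} proof: the identical superposition-plus-two-step Gel'fand--Pinsker codebook with the $V$-bin randomization supplying secrecy, the same covering conditions~\eqref{eqn86}--\eqref{eqn87} and packing-lemma error events (Table~\ref{tab:1}) cleared by Fourier--Motzkin, and the same equivocation computation in which a genie-type conditioning reduces the problem to bounding receiver~1's ability to localize $v^n$, with the leakage term $I(V,S_1;X_1,Y_1,U|Q)$ arising exactly as you predict from the encoder's noncausal use of $s_1^n$. The only cosmetic differences are that the paper gets the reduction to $W_{2a}$ by directly conditioning on $(W_1,W_{2b})$ (no Fano needed for that step, though Fano and a truncated-sequence AEP argument are used to bound $H(V^n\mid W_2,W_1,W_{2b},Y_1^n)$ and $H(Y_1^n\mid W_1,W_{2b})$, matching your list-size sketch), and that in the paper's own decoding rule receiver~2 decodes only $(U^n,V^n)$, so the constraint~\eqref{eqn15} actually emerges from receiver~1's error events rather than from receiver~2 decoding $(X_{1b},U)$ as you state.
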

\begin{proof}
See Appendix~\ref{A}.
\end{proof}
Using the Fourier-Motzkin elimination~\cite{bibi69}, the following explicit description of the region is derived.
\begin{col}
The set of equivocation-rates~$(R_1,R_2,R_{e_2})$ is achievable if it satisfies
\begin{eqnarray}{}\label{eqn18}
\!\!\!\!\!\!\!\!\!\!\!\!\!\!\!\!\!\!\!&&R_1 \leq \min\{I(X_1; Y_1,U |Q), I(X_1,U; Y_1|Q)\},\\\label{eqn19}
\!\!\!\!\!\!\!\!\!\!\!\!\!\!\!\!\!\!\!&&R_2 \leq I(V,U; Y_2, S_2|Q) - I(V,U;X_1, S_1|Q),\\\label{eqn20}
\!\!\!\!\!\!\!\!\!\!\!\!\!\!\!\!\!\!\!&&R_1 + R_2 \leq I(V ; Y_2, S_2|U, Q) - I(V ;X_1, S_1|U, Q)\nonumber\\
\!\!\!\!\!\!\!\!\!\!\!\!\!\!\!\!\!\!\!&&+ I(X_1,U; Y_1|Q),\\\label{eqn21}
\!\!\!\!\!\!\!\!\!\!\!\!\!\!\!\!\!\!\!&&R_{e_2} \leq I(V ; Y_2, S_2,U | Q) - I(V, S_1;X_1, Y_1,U | Q),
\end{eqnarray}
for input distribution factors as~\eqref{eqn17}.
\end{col}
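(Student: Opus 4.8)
The plan is to obtain the Corollary from Theorem~\ref{thm1} by a single Fourier--Motzkin elimination that removes the four auxiliary rates $R_{1a},R_{1b},R_{2a},R_{2b}$ from the system \eqref{eqn8}--\eqref{eqn16}, leaving a description purely in $(R_1,R_2,R_{e_2})$. Since \eqref{eqn16} contains none of the auxiliary rates, the equivocation bound \eqref{eqn21} is inherited unchanged and all the work is confined to the rate inequalities. First I would use the splitting equalities \eqref{eqn8}--\eqref{eqn9} to substitute $R_{1a}=R_1-R_{1b}$ and $R_{2a}=R_2-R_{2b}$; the non-negativity $R_{1a},R_{2a}\ge 0$ then reads $R_{1b}\le R_1$ and $R_{2b}\le R_2$, while \eqref{eqn12} becomes the lower bound $R_{2b}\ge R_2-\big[I(V;Y_2,S_2|U,Q)-I(V;X_1,S_1|U,Q)\big]$.

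Next I would eliminate $R_{2b}$ and then $R_{1b}$ in turn. Abbreviating the right-hand sides of \eqref{eqn10}, \eqref{eqn11}, \eqref{eqn12}, \eqref{eqn13}, \eqref{eqn14}, \eqref{eqn15} by $A,B,C,D,E,F$, the variable $R_{2b}$ then carries lower bounds $\{0,\,R_2-C\}$ and upper bounds $\{R_2,\,E-R_1,\,F-R_{1b}\}$. Pairing each lower bound with each upper bound produces $R_1\le E$, $R_1+R_2\le C+E$, $R_{1b}\le F$ and $R_{1b}+R_2\le C+F$, the pairs $0\le R_2$ and $R_2-C\le R_2$ being vacuous by non-negativity and by $C\ge 0$ (recall the r.h.s.\ of \eqref{eqn10}--\eqref{eqn16} are assumed non-negative). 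Eliminating $R_{1b}$ afterwards, its only lower bound is $0$ against upper bounds $\{R_1,\,B,\,F,\,C+F-R_2\}$, so every resulting inequality is trivial except $R_2\le C+F$. Collecting the survivors yields exactly $R_1\le\min\{A,E\}$ (this is \eqref{eqn18}), $R_2\le D$ (this is \eqref{eqn19}), $R_1+R_2\le C+E$ (this is \eqref{eqn20}), together with one surplus inequality $R_2\le C+F$ that is absent from the Corollary.

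The main obstacle is to show that this surplus constraint is redundant, i.e.\ implied by \eqref{eqn19}. The key identity is the chain-rule decomposition
\[
I(V,U;Y_2,S_2|Q)-I(V,U;X_1,S_1|Q)=C+\big[I(U;Y_2,S_2|Q)-I(U;X_1,S_1|Q)\big],
\]
so that $D=C+\Delta$ with $\Delta:=I(U;Y_2,S_2|Q)-I(U;X_1,S_1|Q)$; then \eqref{eqn19} already gives $R_2\le C+\Delta$, and it suffices to verify $\Delta\le F=I(X_{1b},U;Y_2|X_{1a},Q)$, after which $R_2\le D\le C+F$ renders the surplus inequality vacuous. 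I expect this last inequality to be the delicate point: it cannot come from a generic entropy bound but must be read off from the Markov structure imposed by the factorization \eqref{eqn17}---in particular the chain $(X_{1a},X_{1b},U,V)\to(X_1,X_2,S_1,S_2)\to(Y_1,Y_2)$ together with the fact that $U$ is obtained by binning against $(X_{1a},X_{1b},S_1)$---and it is where I would concentrate the effort. Once $\Delta\le F$ is established the elimination is complete and the region \eqref{eqn18}--\eqref{eqn21} follows.
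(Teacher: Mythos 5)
Your Fourier--Motzkin computation is correct and is exactly the route the paper takes (the paper's own ``proof'' is the single sentence preceding the Corollary, so your write-up is more detailed than the original): with your abbreviations $A,\dots,F$ for the right-hand sides of \eqref{eqn10}--\eqref{eqn15}, eliminating $R_{2b}$ and then $R_{1b}$ together with the splits \eqref{eqn8}--\eqref{eqn9} and nonnegativity yields precisely $R_1\le\min\{A,E\}$, $R_2\le D$, $R_1+R_2\le C+E$, plus the surplus inequality $R_2\le C+F$, while \eqref{eqn16} passes through untouched. The genuine gap is the step you yourself flag and postpone: the redundancy of $R_2\le C+F$, which via the correct identity $D=C+\Delta$ you reduce to $\Delta=I(U;Y_2,S_2|Q)-I(U;X_1,S_1|Q)\le F=I(X_{1b},U;Y_2|X_{1a},Q)$. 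This is not proved in your proposal, and it cannot be ``read off from the Markov structure imposed by \eqref{eqn17},'' because \eqref{eqn17} imposes essentially no structure on the auxiliaries: the factor $p(x_{1a},x_{1b},u,v,x_1,x_2,s_1,s_2|q)$ is arbitrary. Concretely, take $Q,V,S_1,S_2$ degenerate, let $X_1=X_{1b}$ be independent of $X_{1a}$, set $U=X_{1a}$, $X_2=U$, and use a channel with $Y_2=X_2$; then every right-hand side of \eqref{eqn10}--\eqref{eqn16} is nonnegative, yet $\Delta=H(X_{1a})>0=F$, so $D>C+F$ and the surplus constraint is strictly active. (A similar failure occurs if one lets $U$ correlate with $S_2$, which \eqref{eqn17} formally permits even though no encoder could realize it.)

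So the inequality $\Delta\le F$ is false at the level of generality at which you propose to prove it, and the elimination as you have set it up establishes the Corollary only with the additional constraint $R_2\le I(V;Y_2,S_2|U,Q)-I(V;X_1,S_1|U,Q)+I(X_{1b},U;Y_2|X_{1a},Q)$. To discharge it you would have to leave the purely algebraic setting: either restrict to distributions actually induced by the coding scheme of Theorem~\ref{thm1} (codebooks independent of $S_2$, and $U$ carrying the public part $W_{2b}$ that receiver~1 must decode, which is what ties $\Delta$ to the decoding constraint behind $F$) and prove the inequality there, or argue that any triple satisfying \eqref{eqn18}--\eqref{eqn21} is achievable under a re-chosen auxiliary (e.g.\ absorbing $U$ into $V$ when $\Delta>F$), so that the constraint can be dropped from the union over distributions. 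It is worth noting that the paper, which simply invokes Fourier--Motzkin elimination without detail, silently omits this surplus constraint as well; your elimination has in effect exposed that omission rather than resolved it.
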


\begin{remark}
Theorem~1 without secrecy aspect and by substituting~$S_{1}=S_{2}=\emptyset$, is reduced to the result of~\cite[Theorem 1]{bibi13} for the CIC. Moreover, the equivocation-rate~\eqref{eqn16}, by substituting~$S_{1}=S_{2}=\emptyset$, is reduced to equivocation-rate of~\cite[Theorem 1]{bibi31}.
It means that Theorem~1 includes the results of~\cite{bibi13} and~\cite{bibi31}.
\end{remark}

\subsubsection{The Symmetric Channel State}
The special case~$S_1 = S_2 = S$ is of special interest. This case resembles the
secret-key agreement scenario~\cite{bibi69, bibi40}.
The equivocation-rate~\eqref{eqn16} in this case is reduced to the following theorem:
\begin{theorem}\label{thm2}
The secrecy-rate (SR) of the CIC, when the state sequence~$s^n$ is known at the transmitter and the receiver, is given by
\begin{eqnarray}{}\label{eqn22}
R_{e_2}^{SR} &\leq& I(V ; Y_2 | U, S) - I(V ;X_1, Y_1 | U, S) \nonumber\\
&&+ H(S | U,X_1, Y_1).
\end{eqnarray}
\end{theorem}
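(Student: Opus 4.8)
The plan is to return to the equivocation computation underlying Theorem~\ref{thm1} (Appendix~\ref{A}) and specialize it to $S_1=S_2=S$, rather than merely substituting this choice into~\eqref{eqn16}, which by itself would not expose the secret-key gain. The point is that when the \emph{same} state sequence is observed at the cognitive transmitter and at its own receiver but not at the primary receiver, $S^n$ acts as a secret key shared by the legitimate cognitive pair and hidden from the eavesdropping primary receiver. I therefore expect the achievable equivocation to split into two contributions: a wiretap-coding part, $I(V;Y_2|U,S)-I(V;X_1,Y_1|U,S)$, obtained by running the double-binning and random-coding argument of Theorem~\ref{thm1} \emph{conditioned} on the common state $S$; and a secret-key part, $H(S|U,X_1,Y_1)$, equal to the residual uncertainty that the primary receiver has about the key after it has resolved the layers it can decode. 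This is exactly the secret-key agreement picture mentioned before the statement.

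Concretely, I would augment the encoder so that one sub-stream of $W_2$ is one-time-padded with a key extracted from $S^n$, while the remaining stream keeps the usual wiretap protection conditioned on $(U,S)$. The equivocation is then handled through the exact identity
\[
H(W_2|Y_1^n)=H(W_2|S^n,Y_1^n)+I(W_2;S^n|Y_1^n).
\]
For the first term I would treat $S^n$ as known at both ends and reuse the conditional wiretap analysis of Appendix~\ref{A}; after single-letterization with the time-sharing variable set to a constant this yields $I(V;Y_2|U,S)-I(V;X_1,Y_1|U,S)$. For the second term I would use that the primary receiver reliably decodes the public layers $X_1^n$ and $U^n$ (justified by rate constraints such as~\eqref{eqn14}), so that, conditioned on these, the padded sub-stream of $W_2$ is in one-to-one correspondence with the missing key; hence $I(W_2;S^n|Y_1^n)$ is at least $H(S^n|U^n,X_1^n,Y_1^n)$, which by an AEP/typicality estimate equals $nH(S|U,X_1,Y_1)$ up to $o(n)$ because $S^n$ is independent of the messages. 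Summing the two estimates and letting $n\to\infty$ gives~\eqref{eqn22}.

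The main obstacle will be the clean separation of the two contributions: I must ensure that the fraction of $S^n$ spent as a one-time pad is not also credited to the wiretap term, and, conversely, that the conditional wiretap secrecy genuinely survives once the realized key is fixed. This reduces to controlling the joint statistics of $S^n$, the auxiliary codewords $U^n,V^n$, and the primary receiver's observation $Y_1^n$---specifically, verifying both that $H(S^n|U^n,X_1^n,Y_1^n)\approx nH(S|U,X_1,Y_1)$ and that the leakage on $V$ after the pad is removed is still governed by $I(V;X_1,Y_1|U,S)$. Once these two estimates are in place, the two terms add to exactly the claimed bound~\eqref{eqn22}.
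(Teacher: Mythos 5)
Your central premise---that substituting $S_1=S_2=S$ into \eqref{eqn16} ``would not expose the secret-key gain''---is incorrect, and this is exactly where you diverge from the paper. The paper's proof of Theorem~\ref{thm2} \emph{is} that substitution, followed by two chain-rule identities:
\begin{eqnarray*}
&&I(V;Y_2,S,U)-I(V,S;X_1,Y_1,U)\\
&&\quad =\ I(V;Y_2,U|S)-I(V;X_1,Y_1,U|S)\\
&&\qquad +\,I(V;S)-I(S;X_1,Y_1,U),
\end{eqnarray*}
together with $I(V;S)-I(S;X_1,Y_1,U)=H(S|U,X_1,Y_1)-H(S|V)$; dropping $-H(S|V)\le 0$ yields \eqref{eqn22}, and the choice $V\leftarrow(V,S)$ forces $H(S|V)=0$, so the bound is tight. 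The secret-key term $H(S|U,X_1,Y_1)$ thus falls out of \eqref{eqn16} purely algebraically---no new coding scheme, no one-time pad, no re-analysis of the equivocation is required. Your operational route is essentially the ``alternative proof'' that the paper attributes to \cite[Theorem 3]{bibi40} and deliberately does not carry out; it has the virtue of explaining \emph{why} the extra term is a key rate (as the paper's remark does informally), but it is much heavier than what is needed once Theorem~\ref{thm1} is in hand.

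Taken as a stand-alone argument, your sketch also has a genuine unresolved gap, which you partly concede in your closing paragraph. The step $I(W_2;S^n|Y_1^n)\ge H(S^n|U^n,X_1^n,Y_1^n)$ does not follow from receiver~1 decoding the public layers alone: for ``$Y_1^n$ plus $W_2$ determines the missing key'' you need the ciphertext $C=W'\oplus K$ to be recoverable from $Y_1^n$, i.e., the padded sub-stream must ride on a layer receiver~1 can decode. If $C$ is embedded in the $V$-layer, receiver~1 cannot decode it and the one-to-one correspondence fails; if $C$ is moved to a publicly decodable layer, this re-enters the rate constraints \eqref{eqn10}--\eqref{eqn15} and consumes part of receiver~2's decoding budget, and none of that bookkeeping is done. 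Similarly, $H(S^n|U^n,X_1^n,Y_1^n)\approx nH(S|U,X_1,Y_1)$ and the near-uniformity and near-independence of the extracted key from $Y_1^n$ require a key-extraction (binning) lemma: $Y_1^n$ is correlated with $S^n$ through the channel itself, not merely through the codewords, so the independence of the pad is precisely what must be proved rather than assumed. These estimates are established in \cite{bibi40}, and your proof can be completed by importing them wholesale; as written, however, the two contributions are not shown to add without double-counting, while the paper's two-line identity sidesteps the entire issue.
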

\begin{proof}
The achievability of~\eqref{eqn22} results from~\eqref{eqn16} as follows.
\begin{eqnarray}{}\nonumber
R_{e_2} &\leq& I(V ; Y_2, S,U) - I(V, S; X_1, Y_1,U)\nonumber\\
&=&    I(V ; Y_2, U | S)   - I(V ;X_1, Y_1,U | S) \nonumber\\
&&+ I(V ; S) - I(S;X_1, Y_1,U)\nonumber\\
&=&    I(V ; Y_2, U | S)   - I(V ;X_1, Y_1,U | S) \nonumber\\
&&+ H(S | U, X_1, Y_1) - H(S | V )\nonumber\\\label{eqn23}
&\leq& I(V ; Y_2,U | S)    - I(V ;X_1, Y_1,U | S) \nonumber\\
&&+ H(S | U, X_1, Y_1),
\end{eqnarray}
in which the last inequality follows from the non-negativity of the entropy function.
Note that~$V$ is an optimal choice. Therefore, selecting~$V=(V, S)$ leads to~$H(S|V)=0$, and
the bound in the last inequality will be tight.
An alternative proof can be derived directly from the secret-key agreement method taken in~\cite[Theorem 3]{bibi40}.
\end{proof}

\begin{remark}
The inner bound of Theorem~\ref{thm2} can be interpreted from the secret-key agreement point of view~\cite[Theorem 3]{bibi40}.
The term~$I(V ; Y_2 | U, S) - I(V ; X_1, Y_1 | U, S)$ is the rate of a multiplexed
CIC in which the cognitive transmitter and both the receivers (the primary and the secondary receivers),
have knowledge of~$s^n$ and the common message~$u^n$, non-causally. The second term~$H(S | U, X_1, Y_1)$
is the additional secret-key rate which can be produced by using
the fact that the channel state~$s^n$ is only known to the cognitive transmitter-receiver pair.
For more details on using the channel state as a shared secret-key between the transmitter-receiver pair, see~\cite{bibi40}.
\end{remark}

\subsection{Outer bounds}
The following theorems provide two outer bounds on the capacity region of the CIC-PCSI. In the first outer bound, we use the
usual approach taken in the previous work~\cite{bibi13, bibi31} based on the Fano's inequality.
In the second outer bound, we use the approach taken by~\cite{bibi71},
which only depends on the conditional marginal distributions
of the channel outputs given the inputs. This outer bound
does not include auxiliary random variables and every mutual
information term involves the inputs and outputs of the channel.
Therefore, the second outer bound
is looser than the first one, but can be more easily evaluated.

\begin{theorem}[Outer bound 1]\label{thm3}
The set of rates~$(R_{1},R_{2},R_{e_{2}})$ satisfying
\begin{eqnarray}{}\label{eqn24}
&&R_{1} \leq   \min\{I(U, V_1; Y_1), I(V_1; Y_1,U)\},\\\label{eqn25}
&&R_{2} \leq     I(U, V_2; Y_2|S_1, S_2),\\\label{eqn26}
&&R_1+R_2 \leq \min\Big\{I(V_2; Y_2|U, V_1, S_1, S_2) \nonumber\\
&&+ I(V_1,U; Y_1),I(V_2,U; Y_2|S_1, S_2) \nonumber\\
&&+ I(V_1; Y_1|U, V_2) \Big\},
\end{eqnarray}
\begin{eqnarray}\label{eqn27}
&&R_{e_2}\leq   \min \Big\{I(V_2; Y_2 | U) - I(V_2; Y_1 | U), \nonumber\\
&&I(V_2; Y_2 | V_1,U) - I(V_2; Y_1 | V_1, U)\Big\},
\end{eqnarray}
for input distribution that factors as
\begin{eqnarray}{}\label{eqn28}
&&p(s_1)p(s_2)p(v_1)p(v_2)p(u|v_1, v_2)p(x_1|v_1)\nonumber\\
&&p(x_2|u, v_1, v_2, s_1)p(y_1, y_2|x_1, x_2, s_1, s_2),
\end{eqnarray}
is an outer bound on the capacity of this channel.
\end{theorem}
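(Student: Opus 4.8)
The plan is a Fano-inequality converse followed by single-letterization, with the Csisz\'ar sum identity supplying the difference-form and minimum-form expressions. I fix any achievable triple $(R_1,R_2,R_{e_2})$ together with a sequence of $(M_1,M_2,n,P_e)$-codes with $P_e\to 0$. Reliable decoding at the two receivers yields, through Fano's inequality, $H(W_1\mid Y_1^n)\le n\delta_n$ and $H(W_2\mid Y_2^n,S_2^n)\le n\delta_n$ with $\delta_n\to 0$. The per-message bounds then begin from $nR_1=H(W_1)\le I(W_1;Y_1^n)+n\delta_n$ and $nR_2=H(W_2)\le I(W_2;Y_2^n,S_2^n)+n\delta_n$, and the secrecy bound from the definition $nR_{e_2}\le H(W_2\mid Y_1^n)+n\delta_n$.

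First I would single-letterize the rate constraints. Expanding $I(W_1;Y_1^n)=\sum_i I(W_1;Y_{1,i}\mid Y_1^{i-1})$ and inserting a genie random variable produces a sum of per-letter terms $I(\,\cdot\,;Y_{1,i})$; the two expressions inside the minimum of~\eqref{eqn24} arise from writing this quantity with the past output $Y_1^{i-1}$ either folded into the auxiliary $V_1$ or retained as the conditioning output $U$. The same manipulation applied to $I(W_2;Y_2^n,S_2^n)$, carrying $(S_1^n,S_2^n)$ throughout, gives~\eqref{eqn25}, while the two members of the sum-rate minimum in~\eqref{eqn26} correspond to choosing which receiver's output history is swept forward by the Csisz\'ar sum identity before the single-letter terms are collected.

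For the secrecy bound I would first upper-bound the equivocation by introducing the pair decodable at receiver~2, namely $H(W_2\mid Y_1^n)\le I(W_2;Y_2^n,S_2^n\mid Y_1^n)+H(W_2\mid Y_1^n,Y_2^n,S_2^n)\le I(W_2;Y_2^n,S_2^n\mid Y_1^n)+n\delta_n$, the last step again by Fano. Rewriting the leading mutual information as a telescoping difference of per-letter informations to the two receivers and applying the Csisz\'ar sum identity to interchange the two output histories turns it into $\sum_i[\,I(V_{2,i};Y_{2,i}\mid U_i)-I(V_{2,i};Y_{1,i}\mid U_i)\,]$; additionally conditioning on the primary auxiliary $V_{1,i}$ produces the second branch of~\eqref{eqn27}. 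A uniform time-sharing index $Q$ on $\{1,\dots,n\}$ is then folded into each auxiliary, which is why the region~\eqref{eqn24}--\eqref{eqn27} carries no explicit $Q$.

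The auxiliaries I expect to use are a genie of the form $U_i=(Y_1^{i-1},Y_{2,i+1}^n)$ together with $V_{1,i}=(W_1,U_i)$ and $V_{2,i}=(W_2,U_i)$, possibly augmented by state histories. The main obstacle is not the single-letterization but reconciling these choices with the product factorization~\eqref{eqn28}: the bounds must be shown to be attained at some law of that exact form. In particular the marginal independence $V_1\perp V_2$ is delicate, since the genie $U_i$ is built from outputs that depend on both messages and therefore correlates the two auxiliaries; establishing $V_1\perp V_2$ must exploit $W_1\perp W_2$ and a symmetric construction of the genie, while the Markov relations $X_1\leftrightarrow V_1\leftrightarrow(U,V_2,S_1,S_2)$ and $X_2\leftrightarrow(U,V_1,V_2,S_1)\leftrightarrow\cdots$ must encode that the primary encoder sees only $W_1$ whereas the cognitive encoder sees $(W_1,W_2,S_1^n)$. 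The remaining fine point is arranging the conditioning sets so that $(S_1,S_2)$ survive in~\eqref{eqn25}--\eqref{eqn26} but drop out of the secrecy term~\eqref{eqn27}; this is where the exact placement of the state sequences inside the auxiliary variables must be pinned down, and it is the step I expect to require the most care.
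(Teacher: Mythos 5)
Your overall architecture is the same as the paper's Appendix~B -- Fano's inequality, single-letterization, and the Csisz\'{a}r sum identity with a genie built from one receiver's past and the other's future (the paper takes $U_i=(Y_2^{i-1},Y_{1,i+1}^n)$; your mirrored choice is equivalent up to relabeling). But your auxiliary identification contains a concrete misstep. You set $V_{1,i}=(W_1,U_i)$ and $V_{2,i}=(W_2,U_i)$ and then observe that establishing the marginal independence $p(v_1)p(v_2)$ in \eqref{eqn28} is ``delicate''; with those definitions it is not delicate, it is false: both auxiliaries contain the genie $U_i$, which is a function of channel outputs driven by \emph{both} messages, so $V_1$ and $V_2$ are necessarily correlated and no appeal to $W_1\perp W_2$ or symmetry can rescue the factorization. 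The paper avoids the problem entirely by taking $V_{1,i}=W_1$ and $V_{2,i}=W_2$ plain, keeping the genie only inside $U_i$: independence of the $V$'s is then immediate from independence of the messages, and every single-letter expression in \eqref{eqn24}--\eqref{eqn27} already carries $U$ explicitly, so nothing needs to be folded into the $V$'s (nor is any time-sharing variable needed -- the theorem carries none).

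Your secrecy decomposition also creates a difficulty the paper does not have. Writing $H(W_2\mid Y_1^n)\le I(W_2;Y_2^n,S_2^n\mid Y_1^n)+n\delta_n$ imports $S_2^n$ into the quantity to be single-letterized, and $S_2^n$ does not drop out for free: although $W_2\perp S_2^n$ unconditionally, they are dependent given $Y_1^n$, because $S_2$ also drives the primary output (cf.~\eqref{eqn1} and the Gaussian model), so $I(W_2;S_2^n\mid Y_1^n)$ is generally positive and your telescoping leaves residual state terms that are absent from \eqref{eqn27}. You correctly flag this as the step requiring the most care, but the resolution is not a cleverer placement of states inside the auxiliaries; it is to follow the paper's route, which never introduces states into the secrecy bound at all: $H(W_2\mid Y_1^n)=I(W_2;Y_2^n)-I(W_2;Y_1^n)+H(W_2\mid Y_2^n)$, the last term bounded via Fano (as in \eqref{eqn104}), after which the Csisz\'{a}r sum identity applied to the stateless difference -- once unconditioned and once conditioned on $W_1$ -- yields both branches of \eqref{eqn27} directly, as in \eqref{eqn108}--\eqref{eqn109}. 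With these two repairs your plan coincides with the paper's proof; your sum-rate sketch already matches \eqref{eqn102}--\eqref{eqn103}.
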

\begin{proof}
The proof of Theorem~\ref{thm3} is relegated to Appendix~\ref{B}.
\end{proof}

\begin{theorem}[Outer bound 2]\label{thm4}
The set of rates~$(R_{1},R_{2},R_{e_{2}})$ satisfying
\begin{eqnarray}{}\label{eqn29}
&&R_{1} \leq   I(X_1, X_2; Y_1),\\\label{eqn30}
&&R_{2} \leq   I(X_2; Y_2|X_1, S_1, S_2),\\\label{eqn31}
&&R_1+R_2 \leq I(Y_1;X_1,X_2, S_1, S_2) \nonumber\\
&&+ I(X_2; Y_2|X_1, S_1, S_2,Y_1^{'} ),\\\label{eqn32}
&&R_{e_2}\leq   \min \Big\{I(X_2; Y_2) - I(X_2; Y_1), I(X_2; Y_2 | X_1) \nonumber\\
&&- I(X_2; Y_1 | X_1)\Big\},
\end{eqnarray}
for all distributions~$P_{X_1 , X_2}$ and~$P_{Y_2, Y_{1}^{'}| X_1, X_2, S_1, S_2}$, where~$Y_{1}^{'}$ has the same marginal distribution as~$Y_{1}$, i.~e.,~$P_{Y_{1}^{'}| X_1, X_2, S_1, S_2}=P_{Y_{1}| X_1, X_2, S_1, S_2}$, is an outer bound on the capacity of this channel.
\end{theorem}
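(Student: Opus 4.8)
The plan is to run a Fano-based converse and then single-letterize every term so that the region depends on the channel only through the output \emph{marginals} $P_{Y_1|X_1,X_2,S_1,S_2}$ and $P_{Y_2|X_1,X_2,S_1,S_2}$, which is what lets us replace $Y_1$ by an arbitrary $Y_1'$ with the same conditional marginal, following the technique of~\cite{bibi71}. Receiver~1 decodes $W_1$ from $Y_1^n$ and receiver~2 decodes $W_2$ from $(Y_2^n,S_2^n)$, so Fano gives $H(W_1|Y_1^n)\le n\epsilon_n$ and $H(W_2|Y_2^n,S_2^n)\le n\epsilon_n$. For~\eqref{eqn29} I would start from $nR_1\le I(W_1;Y_1^n)+n\epsilon_n$, introduce the inputs, and single-letterize to $I(X_1,X_2;Y_1)$; for~\eqref{eqn30} I would use the independence of $S_1^n$ from $(W_1,W_2)$ to insert $S_1^n$ into the conditioning at no cost, condition on $W_1$ (hence on $X_1^n$), and reduce $nR_2\le I(W_2;Y_2^n,S_2^n\mid X_1^n,S_1^n)+n\epsilon_n$ to the single-letter $I(X_2;Y_2\mid X_1,S_1,S_2)$.

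For the sum-rate bound~\eqref{eqn31}, the idea borrowed from~\cite{bibi71} is to supply receiver~2 with a genie output $Y_1^{\prime n}$ having the same conditional law as $Y_1^n$. Beginning from $n(R_1+R_2)\le I(W_1;Y_1^n)+I(W_2;Y_2^n,S_2^n\mid W_1)+n\epsilon_n$, I would expand the first term as $I(Y_1^n;X_1^n,X_2^n,S_1^n,S_2^n)$ and, in the second, enlarge the observation by $Y_1^n$ and the states so that after the independence and deterministic-input reductions a residual $I(X_2^n;Y_2^n\mid X_1^n,S_1^n,S_2^n,Y_1^n)$ survives. Single-letterizing yields the two summands of~\eqref{eqn31}; the decisive point is that the first summand depends only on the marginal of $Y_1$ given the inputs and states, so $Y_1$ may be replaced by any $Y_1'$ with the matched marginal, which is exactly the $P_{Y_2,Y_1'\mid X_1,X_2,S_1,S_2}$ formulation of the statement.

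For the equivocation bound~\eqref{eqn32} I would treat $W_2$ as a wiretap message whose legitimate receiver is receiver~2 and whose eavesdropper is receiver~1. Combining the secrecy definition~\eqref{eqn5} with Fano on receiver~2 gives $nR_{e_2}\le H(W_2|Y_1^n)\le I(W_2;Y_2^n,S_2^n)-I(W_2;Y_1^n)+n\epsilon_n$. To convert the $n$-letter difference into the single-letter quantity $I(X_2;Y_2)-I(X_2;Y_1)$ I would use the independence of $S_2^n$ from $W_2$, introduce $X_2$ by data processing, and apply the Csisz\'ar sum identity together with the chain rule so that $Y_1$ and $Y_2$ appear only in \emph{separate} per-letter mutual informations. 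Repeating the argument once with $X_1$ in the conditioning and once without produces the two entries of the minimum in~\eqref{eqn32}.

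The step I expect to be the main obstacle is the single-letterization under state dependence with \emph{non-causal} CSI: because $X_2^n$ is a function of the entire $S_1^n$, the naive Markov chain $W\to(X_1^n,X_2^n)\to Y^n$ fails, and one must carefully track the states so that they enter only with their fixed priors and so that $Y_1$ and $Y_2$ never appear inside a common (joint) mutual-information term. Getting the equivocation difference into the form $I(X_2;Y_2)-I(X_2;Y_1)$ while discharging the asymmetrically revealed states $S_1,S_2$ through the Csisz\'ar sum identity is where the bookkeeping is most delicate; justifying that inserting $S_1^n$ and the matched-marginal genie $Y_1^{\prime n}$ costs nothing rests throughout on the independence of the states from the messages.
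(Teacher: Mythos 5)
Your plan for \eqref{eqn29}--\eqref{eqn31} is essentially the paper's (Appendix~C): Fano's inequality, the independence of $(S_1^n,S_2^n)$ from $(W_1,W_2)$, conditioning reductions, and the genie technique of~\cite{bibi71}. However, your bookkeeping for \eqref{eqn31} inverts where the matched-marginal condition does its work. In the paper the genie enters at the outset, via $I(W_2;Y_2^n\mid W_1,S_1^n,S_2^n)\le I(W_2;Y_2^n,Y_1'^{n}\mid W_1,S_1^n,S_2^n)$, where $Y_1'^{n}$ is generated by an \emph{arbitrary} joint $P_{Y_2,Y_1'\mid X_1,X_2,S_1,S_2}$ with the prescribed marginal; the marginal condition is then used exactly once, in step~$(o)$, to cancel $H(Y_1'^{n}\mid W_1,S_1^n,S_2^n)$ against the term $H(Y_1^n\mid W_1)$ coming from $I(W_1;Y_1^n)$. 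Consequently the \emph{true} $Y_1$ survives in the first summand $I(Y_1;X_1,X_2,S_1,S_2)$, and the genie $Y_1'$ appears only in the conditioning of the second summand $I(X_2;Y_2\mid X_1,S_1,S_2,Y_1')$ --- the opposite of your claim that ``the first summand depends only on the marginal of $Y_1$, so $Y_1$ may be replaced.'' Since that second summand depends on the \emph{joint} conditional law of $(Y_2,Y_1')$, your alternative route (derive the bound with the true $Y_1^n$ in the conditioning, then swap in $Y_1'$) cannot be justified by a marginal argument alone; it would need the operational observation that each decoder sees only its own output, which your sketch never states.

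The genuine gap is in \eqref{eqn32}. You propose to ``introduce $X_2$ by data processing'' inside the difference $I(W_2;\cdot)-I(W_2;\cdot)$, but differences of mutual informations are not monotone under data processing: no inequality of the form $I(W_2;Y_{2,i}\mid U_i)-I(W_2;Y_{1,i}\mid U_i)\le I(X_{2,i};Y_{2,i}\mid U_i)-I(X_{2,i};Y_{1,i}\mid U_i)$ holds in general, which is precisely why Csisz\'{a}r--K\"{o}rner-type converses retain an auxiliary random variable rather than the channel input. Your Csisz\'{a}r-sum derivation therefore lands you back at the auxiliary bound \eqref{eqn27} of Theorem~\ref{thm3}, with $U_i=(Y_2^{i-1},Y_{1,i+1}^n)$ and $V_{2,i}=W_2$, and the passage from $W_2$ to $X_2$ is exactly the step your proposal leaves unproved. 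The paper does not attempt your direct route: it obtains \eqref{eqn32} from \eqref{eqn27} by the asserted substitution $V_1\to X_1$, $V_2\to X_2$ --- itself a loose step, but one your argument neither reproduces nor repairs. A smaller mismatch: the paper's equivocation chain (step~$(k)$ in Appendix~B) works with $I(W_2;Y_2^n)$, silently ignoring $S_2^n$ in Fano's inequality, whereas your more careful start from $I(W_2;Y_2^n,S_2^n)$ generates state terms that must be discharged to reach the stateless expression in \eqref{eqn32}, and your sketch does not say how.
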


\begin{proof}[Outline of the proof]
The rates~\eqref{eqn29}--\eqref{eqn31} are derived using the side information approach taken by~\cite{bibi71}.
The rate~\eqref{eqn32} is derived according to the previous rate~\eqref{eqn27} by substituting the
auxiliary random variables~$V_1$ and~$ V_2$ by~$ X_1$ and~$ X_2$, respectively.
This outer bound is looser than the one in Theorem~\ref{thm3}, but it does not include auxiliary random variables and
thus it can be more easily evaluated. The details on the proof are relegated to Appendix~\ref{C}.
\end{proof}

\subsection{The Gaussian example}
To clarify our results more perceptibly, consider the Gaussian CIC-PCSI.
The channel model is shown in Figure~\ref{fig:2}, and can be described as follows:
\begin{figure}
\centering
\epsfig{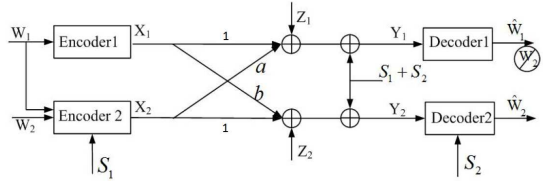}
\caption{The Gaussian cognitive interference channel with channel state
available at the transmitter and the receiver, with a confidential message.}
\label{fig:2}
\end{figure}
\begin{eqnarray}{}\nonumber\label{eqn37}
Y_{1}  &=&X_{1}+ aX_{2}+  S_1+ S_2 + Z_{1},\\
Y_{2}  &=&bX_{1}+ X_{2} + S_1+ S_2+ Z_{2},
\end{eqnarray}
where~$X_{i}$ and~$Y_{i}$ denotes the input and the output of the~$i$-th transmitter-receiver pair.~$Z_{i}\sim \mathcal{N}(0, 1)$
is Additive White Gaussian Noise (AWGN) at the~$i$-th receiver where~$i\in\{1,2\}$.~$S_i\sim \mathcal{N} (0,K_i)$
denotes the partial channel
state sequences which are known at the cognitive transmitter and the corresponding receiver, respectively.
The constants~$a$ and~$b$
are the real-valued channel gains in the interfering links and the average power constraint
is~$\frac{1}{n}\sum_{k=1}^{n}(X_{i, k}(t))^{2}\leq P_{i}$, $i\in\{1,2\}$.
In this model, for simplicity, we consider the partial channel state sequences to be additive and independent Gaussian
random variables.
This model can be motivated by the case in which two different interfering signals affect the channel,
and each one is estimated at one of the cognitive transmitter-receiver nodes.
Now, we consider the cases in which~$a \leq 1$ and~$a > 1$, separately.

\subsubsection{The case~$a \leq 1$}
This case is reported as the \textit{weak interference} case in the literature~\cite{bibi9, bibi82}. The capacity
region of the CIC in this case without CSI is determined by~\cite{bibi12, bibi80}, in which the cognitive encoder uses Dirty Paper
Coding (DPC) [19] for~$W_2$ against~$W_1$. Furthermore, using the SPC in the cognitive transmitter, the message~$W_1$ is
conveyed to receiver~2. In the weak interference case, receiver~2 does not suffer from the interference,
since, transmitter~2 uses DPC on~$X_2$ against~$X_1$ and known channel state. Moreover, the primary receiver
is not affected by the interfering signal~$X_2$ due to the weak interference. The following theorem describes the
achievable equivocation-rate region of the Gaussian CIC-PCSI in this case.

\begin{theorem}[Achievable equivocation-rate region]\label{thm5}
The set of rates~$(R_1, R_2,R_{e_2} )$ satisfying
\begin{eqnarray}{}\label{eqn38}
R_1   &\leq& \mathcal{C}\Big(\frac{P_1}{K_2+1}\Big),\\\label{eqn39}
R_{2} &\leq& \mathcal{C}((1-\rho^2)P_2),\\\label{eqn40}
R_{e_2}&\leq& \mathcal{C}((1-\rho^2)P_2)-\mathcal{C}((1-\rho^2)a^2P_2),
\end{eqnarray}
in which~$\mathcal{C}(x)=\frac{1}{2}(1+x)$ and~$\rho\in [0, 1]$, is an achievable equivocation-rate region of the Gaussian CIC-PCSI,
shown in Figure~\ref{fig:2} for the case~$a\leq 1$.
\end{theorem}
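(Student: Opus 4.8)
The plan is to specialize the explicit inner bound of the Corollary, equations~\eqref{eqn18}--\eqref{eqn21}, to the Gaussian channel~\eqref{eqn37} with jointly Gaussian inputs and a Costa-style dirty-paper (Gel'fand--Pinsker) precoder at the cognitive transmitter; this is exactly the binning regime selected by Lemma~\ref{lem1}. First I would take $Q$ deterministic and $X_1\sim\mathcal{N}(0,P_1)$. I would then split the cognitive input as $X_2=X_2^{(c)}+X_2^{(s)}$, where $X_2^{(s)}\sim\mathcal{N}(0,(1-\rho^2)P_2)$, independent of $X_1$, carries the confidential message $W_2$, and $X_2^{(c)}$ (of power $\rho^2P_2$), a function of the known quantities $W_1$ and $S_1$, is the cooperative component conveyed through the common auxiliary $U$. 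The secret layer is dirty-paper coded: set $V=X_2^{(s)}+\alpha\,T$, where $T$ gathers the interference known non-causally at the cognitive transmitter as seen at $Y_2$ (the relevant parts of $X_1$, $X_2^{(c)}$ and $S_1$), and choose the Costa coefficient $\alpha=(1-\rho^2)P_2/\bigl((1-\rho^2)P_2+1\bigr)$.

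With these choices I would evaluate the three active bounds. Because $S_2$ is known at receiver~2 it is subtracted there, and the Costa choice of $\alpha$ annihilates the known interference $T$, so the difference $I(V,U;Y_2,S_2|Q)-I(V,U;X_1,S_1|Q)$ in~\eqref{eqn19} reduces to the interference-free expression $\mathcal{C}((1-\rho^2)P_2)$, which is~\eqref{eqn39}. For the primary rate I would use~\eqref{eqn18}: the cooperative layer carried in $U$ is designed to remove $S_1$ at receiver~1, so that $X_1$ of power $P_1$ is received against the residual noise $S_2+Z_1$ of variance $K_2+1$, giving $R_1\leq\mathcal{C}(P_1/(K_2+1))$, i.e.\ \eqref{eqn38}. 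For the equivocation I would specialize~\eqref{eqn21}: the legitimate term reproduces $\mathcal{C}((1-\rho^2)P_2)$, while the penalty $I(V,S_1;X_1,Y_1,U|Q)$, evaluated with a genie supplying $X_1$ and the states to the eavesdropping receiver~1, equals the leakage of the secret layer through the gain-$a$ cross link, namely $\mathcal{C}((1-\rho^2)a^2P_2)$; their difference yields~\eqref{eqn40}.

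The step I expect to be the main obstacle is the interference bookkeeping at receiver~1, which plays the dual role of primary decoder and eavesdropper. I must verify that a single construction simultaneously lets receiver~1 recover $W_1$ against noise of variance only $K_2+1$ while holding its information about $W_2$ down to $\mathcal{C}((1-\rho^2)a^2P_2)$, reconciling these through wiretap (random-binning) coding together with the genie-aided eavesdropper bound, and respecting the asymmetry that $S_1$ is known only at the transmitter whereas $S_2$ is cancellable only at receiver~2. I would finally check feasibility of the power split for $\rho\in[0,1]$ under the average power constraint $P_2$, and confirm that the equivocation in~\eqref{eqn40} is non-negative, which is precisely where the weak-interference hypothesis $a\leq1$ enters, since it guarantees $\mathcal{C}((1-\rho^2)P_2)\geq\mathcal{C}((1-\rho^2)a^2P_2)$.
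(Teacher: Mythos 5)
Your proposal is correct and takes essentially the same route as the paper: the paper proves Theorem~\ref{thm5} by invoking the Gaussian DPC scheme of~\cite{bibi80} with jointly Gaussian inputs ($X_i\sim\mathcal{N}(0,P_i)$, $E[X_1X_2]=\rho\sqrt{P_1P_2}$), treating $S_1$ as known interference in the dirty-paper code at the cognitive transmitter, cancelling $S_2$ at the cognitive receiver, and crediting the cooperation with removing $S_1$ at receiver~1 --- which is precisely your specialization of the binning inner bound with a Costa auxiliary and the genie-aided leakage term $\mathcal{C}((1-\rho^2)a^2P_2)$. The paper's proof is in fact terser than yours, delegating the receiver-1 interference bookkeeping that you rightly flag as the main obstacle entirely to the citation of~\cite{bibi80}, so your explicit evaluation of~\eqref{eqn18}--\eqref{eqn21}, including the $a\leq 1$ non-negativity check for~\eqref{eqn40}, is if anything the more complete write-up of the same argument.
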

\begin{proof}
The proof is similar to the one presented in~\cite{bibi80} without secrecy and by substituting~$X_i\sim\mathcal{N}(0, P_i)$
for~$i\in\{1,2\}$ and~$E[X_1X_2]=\rho\sqrt{P_1P_2}$.
The channel state~$S_1$ is treated as interference by the cognitive
transmitter in DPC and does not affect the rate. On the other hand, the channel state~$S_2$, which is known
non-causally at the cognitive receiver, can be easily canceled out. Thus, these channel states do
not affect the rate~\eqref{eqn39}. The primary receiver~1 is affected by the channel state~$S_2$ as an additional interference,
but the channel state~$S_1$ is canceled out for this receiver by the cognitive transmitter's cooperation.
For more details on the proof see~\cite{bibi80}.
\end{proof}

\begin{remark}
The achievable equivocation-rate region for the Gaussian CIC-PCSI in Theorem~\ref{thm5} is maximized for $\rho=0$ since $a\leq1$. Thus in this case, the cognitive transmitter meets its capacity and the equivocation leads to~$\mathcal{C}(P_2)-\mathcal{C}(a^2P_2)$.
\end{remark}
\subsubsection{The case~$a > 1$}
In this case, which is known as the \textit{strong interference}~\cite{bibi9, bibi82},
the channel output at the cognitive receiver
is a degraded version of that at the primary one, thus there is no secrecy
in this condition, i.~e.,~$R_{e_2} = 0$.
In this case, receiver~1, having better observation of~$X_2$ than the cognitive receiver, can decode
the message of the cognitive transmitter without any penalty rate. The capacity of the CIC without channel state~\cite{bibi9, bibi80},
is a trivial outer bound on the capacity of the CIC-PCSI. This outer bound is presented in the following.

\begin{theorem}[Gaussian outer bound{\cite[Theorem 2]{bibi9}}\label{thm6}]
The set of rates~$(R_1,R_2)$ satisfying
\begin{eqnarray}{}\label{eqn41}
R_2   &\leq& \mathcal{C}((1-\rho^2)P_2),\\\label{eqn42}
R_1+R_2&\leq& \mathcal{C}(P_1+ a^2P_2+2a\rho\sqrt{P_1P_2}),\\\label{eqn43}
R_1+R_2&\leq& \mathcal{C}(b^2P_1+P_2+2b\rho\sqrt{P_1P_2}),
\end{eqnarray}
is an outer bound on the capacity of the Gaussian CIC-PCSI for the case~$a > 1$.
\end{theorem}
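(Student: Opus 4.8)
The plan is to recognize that the stated region is precisely the capacity region of the \emph{state-free} strong-interference Gaussian CIC, already characterized in~\cite[Theorem 2]{bibi9}, and to reduce our state-dependent problem to that one via a genie argument. Since the outer bound claims an upper bound on capacity, the whole proof is a converse.

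First I would establish the genie-aided reduction. Because the states $S_1^n$ and $S_2^n$ enter the channel~\eqref{eqn37} purely additively, revealing both sequences to \emph{both} encoders and \emph{both} decoders cannot shrink the capacity region; the capacity of this genie-aided channel is therefore an outer bound on the capacity of the CIC-PCSI. In the genie-aided channel each receiver can subtract the now-known quantity $S_{1,k}+S_{2,k}$ from its output, so the effective channel collapses to $Y_1=X_1+aX_2+Z_1$ and $Y_2=bX_1+X_2+Z_2$, i.e., the ordinary Gaussian CIC with no state. Hence any converse for the state-free Gaussian CIC applies verbatim here.

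Next, for $a>1$ (strong interference) the output $Y_1$ is a degraded observation of $X_2$ relative to $Y_2$, so receiver~1 can also decode $W_2$; this is exactly the regime of~\cite[Theorem 2]{bibi9}, whose converse I would invoke. Writing $\rho=E[X_1X_2]/\sqrt{P_1P_2}$ for the correlation induced by the cognitive transmitter's knowledge of $W_1$, the single-user bound~\eqref{eqn41} follows because only the innovation of power $(1-\rho^2)P_2$ conveys new information about $W_2$, giving $R_2\le\mathcal{C}((1-\rho^2)P_2)$. The two sum-rate bounds~\eqref{eqn42}--\eqref{eqn43} come from treating each receiver as a multiple-access channel that must decode both messages: the received signal power is $P_1+a^2P_2+2a\rho\sqrt{P_1P_2}$ at receiver~1 and $b^2P_1+P_2+2b\rho\sqrt{P_1P_2}$ at receiver~2, and the maximum-entropy property of the Gaussian distribution under the covariance constraint fixed by $(P_1,P_2,\rho)$ yields the logarithmic expressions. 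Taking the union over $\rho\in[0,1]$ gives the claimed region, and since $R_{e_2}=0$ in this degraded regime there is no equivocation term to bound.

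The main obstacle I expect is the rigorous converse for the sum-rate constraints: I would combine Fano's inequality with the strong-interference fact that receiver~1 reliably decodes both messages, and then apply the Gaussian maximum-entropy bound, taking care that the per-symbol power and cross-correlation constraints aggregate correctly to $(P_1,P_2,\rho)$. By contrast, the genie-aided state-cancellation step is the cleanest part; its only subtlety is verifying that subtracting the additive states costs nothing in the converse, which holds precisely because the states are independent of the messages and enter additively, so the mutual-information terms are unchanged once the states are common knowledge.
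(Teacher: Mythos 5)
Your proposal takes essentially the same route as the paper: the paper gives no standalone proof of this theorem, stating only that the capacity of the state-free Gaussian CIC is a trivial outer bound on the CIC-PCSI and citing~\cite[Theorem 2]{bibi9}, which is exactly your genie-aided state-cancellation reduction followed by the known strong-interference converse (single-user bound plus the two MAC-style sum-rate bounds). One wording slip worth fixing: for $a>1$ the degradedness runs the other way---it is $Y_2$ that is a degraded observation of $X_2$ relative to $Y_1$, as the paper itself notes---though your operative conclusion, that receiver~1 can decode $W_2$ without rate penalty, is the correct one.
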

%*******************************************SPC********************************

\section{Using the Superposition Coding}\label{S4}
The cognitive transmitter can superimpose part of its message on~$X^n_1$ instead of binning. Thus, it should split
its message as~$W_2 = W_{21} + W_{22}$, in which~$W_{21}$ is intended to both receivers and~$W_{22}$ is only decodable at the
cognitive receiver. Moreover, the cognitive transmitter uses GPC via three auxiliary random variables~$T$ ,~$U$ and~$V$
to reduce the channel state interference for~$W_1$,~$W_{21}$ and~$W_{22}$, respectively. In particular,~$T$ deals with state
interference for either receiver~1 or receiver~2 to decode~$W_1$;~$ U$ deals with state interference for either receiver~1
or receiver~2 to decode~$W_{21}$; and~$V$ deals with state interference for receiver~2 to decode~$W_{22}$. Now, we propose
the main results which are derived based on this scheme.
\subsection{An Inner Bound}
\begin{theorem}[Achievable equivocation-rate region]\label{thm7}
The set of rates~$ (R_1,R_{21},R_{22},R_{e_2} )$ is achievable if it satisfies
\begin{eqnarray}{}\label{eqn44}
\!\!\!\!\!\!\!\!\!\!\!\!&&R_1+R_{21}      \leq I(T,U,X_1; Y_1) - I(T,U; S_1|X_1),\\\label{eqn45}
\!\!\!\!\!\!\!\!\!\!\!\!&&R_{22}          \leq I(V ; Y_2, S_2|U,X_1, T) - I(V ; S_1|U,X_1, T),\\\label{eqn46}
\!\!\!\!\!\!\!\!\!\!\!\!&&R_{2}           \leq I(U, V ; Y_2, S_2|X_1, T) - I(U, V ; S_1|X_1, T),\\\label{eqn47}
\!\!\!\!\!\!\!\!\!\!\!\!&&R_2             \leq I(T,U, V ; Y_2, S_2|X_1) - I(T,U, V ; S_1|X_1),\\\label{eqn48}
\!\!\!\!\!\!\!\!\!\!\!\!&&R_1 + R_{2}    \leq  I(T,U, V,X_1; Y_2, S_2) - I(T,U, V ; S_1|X_1),\nonumber\\
\!\!\!\!\!\!\!\!\!\!\!\!&&\\\label{eqn49}
\!\!\!\!\!\!\!\!\!\!\!\!&&R_{e_2}         \leq I(V ; Y_2, S_2|U,X_1, T) \nonumber\\
\!\!\!\!\!\!\!\!\!\!\!\!&&- \max\{I(V ; S_1|U,X_1, T), I(V ; Y_1|U,X_1, T)\},
\end{eqnarray}
for input distribution factors as
\begin{eqnarray}{}\label{eqn50}
&&P_{X_1 S_1 S_2 T U V X_2 Y_1 Y_2} = P_{S_1} P_{ S_2} P_{X_1}\nonumber\\
&& \times P_{T U V X_2|X_1 S_1} P_{Y_1 Y_2|X_1 X_2 S_1 S_2},
\end{eqnarray}
in which the r.h.s. of the equations~\eqref{eqn44}--\eqref{eqn49} are non-negative and~$T, U, V$ are auxiliary random
variables.
\end{theorem}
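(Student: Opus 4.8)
The plan is to prove achievability by a layered superposition--Gel'fand--Pinsker construction with an extra wiretap randomization layer carried inside $V$, and then to read off the reliability constraints from the covering and packing lemmas and the equivocation bound from a Wyner-type computation. First I would fix a joint law factoring as in~\eqref{eqn50} and generate $2^{nR_1}$ independent sequences $x_1^n(w_1)\sim\prod_i P_{X_1}(x_{1,i})$. On each $x_1^n(w_1)$ I superimpose a Gel'fand--Pinsker codebook for $T$ carrying $w_1$, with binning (covering) rate $\tilde R_T\ge I(T;S_1\mid X_1)$; on each $(x_1^n,t^n)$ a second such codebook for $U$ carrying $w_{21}$ with $\tilde R_U\ge I(U;S_1\mid X_1,T)$; and finally on each $(x_1^n,t^n,u^n)$ a $V$-codebook whose total rate splits into the confidential message $W_{22}$ at rate $R_{22}$ and a randomization index of rate $R_d$, drawn from $\prod_i P_{V\mid X_1,T,U}$. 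The index $R_d$ serves a double duty---Gel'fand--Pinsker covering against $S_1$ (needing $R_d\ge I(V;S_1\mid X_1,T,U)$) and masking $V$ from receiver~1 (needing $R_d\ge I(V;Y_1\mid X_1,T,U)$)---which is exactly where the $\max$ in~\eqref{eqn49} will come from.

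For encoding, given $(w_1,w_{21},w_{22},s_1^n)$ the cognitive transmitter performs successive joint-typicality (covering) encoding: it picks $t^n$ in bin $w_1$ with $(x_1^n,t^n,s_1^n)$ jointly typical, then $u^n$ with $(x_1^n,t^n,u^n,s_1^n)$ jointly typical, then $v^n$ in sub-codebook $w_{22}$ with $(x_1^n,t^n,u^n,v^n,s_1^n)$ jointly typical, and forms $x_2^n$ according to $P_{X_2\mid X_1 T U V S_1}$. The covering lemma guarantees each selection succeeds with vanishing failure probability under the stated binning rates, and the chain rule gives the total Gel'fand--Pinsker cost $\tilde R_T+\tilde R_U+R_d=I(T,U,V;S_1\mid X_1)$ when the layers are aggregated.

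For reliability, receiver~1 jointly-typicality decodes $(x_1^n,t^n,u^n)$ from $y_1^n$ (it never decodes $V$), and receiver~2 decodes $(x_1^n,t^n,u^n,v^n)$ from $(y_2^n,s_2^n)$. Applying the packing lemma and cancelling the binning rates against the Gel'fand--Pinsker cost terms yields~\eqref{eqn44} at receiver~1. At receiver~2 the different decoding orders of the nested layers produce the family~\eqref{eqn45}--\eqref{eqn48}: decoding $V$ given $(X_1,T,U)$ gives~\eqref{eqn45}, decoding $(U,V)$ given $(X_1,T)$ gives~\eqref{eqn46}, decoding $(T,U,V)$ given $X_1$ gives~\eqref{eqn47}, and decoding all four layers jointly gives~\eqref{eqn48}, each after the same cancellation of binning rates.

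The last and hardest step is the equivocation~\eqref{eqn49}. I would lower-bound $\tfrac1n H(W_2\mid Y_1^n)$, using that $W_{21}$ is decoded at receiver~1 so that only $W_{22}$ contributes, and condition on $(X_1^n,T^n,U^n)$. The key point is that, within the sub-codebook indexed by $w_{22}$, the transmitted $v^n$ is singled out by the state $s_1^n$, which receiver~1 does not observe, while $y_1^n$ resolves $V$ only at rate $I(V;Y_1\mid X_1,T,U)$. Choosing $R_d=\max\{I(V;S_1\mid X_1,T,U),\,I(V;Y_1\mid X_1,T,U)\}$ and running a Wyner-type fan-out estimate then shows $W_{22}$ is essentially perfectly protected, so $R_{e_2}=R_{22}$; combined with the receiver-2 decoding constraint $R_{22}+R_d\le I(V;Y_2,S_2\mid X_1,T,U)$ this gives~\eqref{eqn49}. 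The main obstacle is precisely establishing that a \emph{single} randomization budget $R_d$ simultaneously furnishes the Gel'fand--Pinsker covering randomness and the wiretap secrecy randomness, so the two costs combine through a maximum rather than adding; I expect to resolve this by the state-as-secret-key technique of~\cite{bibi40}, treating the $S_1$-covering portion of $R_d$ as an effective key shared by the cognitive pair, exactly as in the interpretation following Theorem~\ref{thm2}.
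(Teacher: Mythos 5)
Your proposal is correct and follows essentially the same route as the paper's Appendix~\ref{D}: the same layered superposition of Gel'fand--Pinsker codebooks $T$, $U$, $V$ over $X_1$, the same decoding split (receiver~1 stops at $U$, receiver~2 decodes $V$), and the same double-duty randomization budget of rate $\max\{I(V;S_1|U,X_1,T),\,I(V;Y_1|U,X_1,T)\}$, which the paper implements through its bin/subbin partition of the $V$-codebook and the bound $H(A,W_1|W_2,Y_1^n)\leq\log|\mathcal{A}|$ in the equivocation chain~\eqref{eqn113}. Your Wyner-type equivocation estimate and the appeal to the state-as-secret-key viewpoint of~\cite{bibi40} match the paper's calculation and its interpretation following Theorem~\ref{thm2}.
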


\begin{proof}
The proof is relegated to Appendix~\ref{D}.
\end{proof}

\subsubsection{The Symmetric Channel State}
For the special case~$S_1 = S_2 = S$, we have the following result.
\begin{col}
For the case in which~$S_1 = S_2 = S$, the set of rates~$(R_1,R_2,R_{e_2} )$ is achievable if it satisfies
\begin{eqnarray}{}\label{eqn58}
\!\!\!\!\!\!\!\!\!&&R_1             \leq I(U, X_1; Y_1) - I(U; S|X_1),\\\label{eqn59}
&&R_{2}           \leq I(X_2 ; Y_2|X_1, S)\\\label{eqn60}
&&R_1 + R_{2}     \leq I(U,X_1; Y_1) + I(X_2; Y_2|X_1, S) \nonumber\\
&&- I(U ; S|X_1),\\\label{eqn61}
&&R_{e_2}         \leq \min\{I(X_2 ; Y_2|U, X_1, S) , \nonumber\\
&&I(X_2 ; Y_2, S|U, X_1) - I(X_2 ; Y_1|U, X_1)\},
\end{eqnarray}
for input distribution that factors as
\begin{eqnarray}{}\label{eqn62}
P_{S X_1 X_2 Y_1 Y_2} = P_{S} P_{X_1}  P_{X_2|X_1 S} P_{Y_1 Y_2|X_1 X_2 S}.
\end{eqnarray}
\end{col}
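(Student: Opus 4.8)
The plan is to obtain the corollary as a direct specialization of Theorem~\ref{thm7}, the defining feature of the symmetric case being that the common state $S = S_1 = S_2$ is now available non-causally at the cognitive receiver. The conceptual starting point is that Gel'fand--Pinsker binning against the state is only needed to protect a message at a receiver that does \emph{not} observe the state. Receiver~2 knows $S$, so the auxiliary $V$, whose sole purpose in Theorem~\ref{thm7} is to combat state interference for the private message $W_{22}$ at receiver~2, no longer needs to bin; I would therefore make the deterministic choice $V = X_2$. The auxiliary $U$, however, carries the common part $W_{21}$ that is also decoded at receiver~1 (which still does not know $S$), so it must be retained for binning against the state on the $Y_1$ link; correspondingly I would merge $T$ into $U$ (equivalently set $T=\emptyset$), a single auxiliary sufficing for the primary message. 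With these choices I first check that the general factorization~\eqref{eqn50} collapses to~\eqref{eqn62}: setting $S_1 = S_2 = S$ turns $P_{S_1}P_{S_2}$ into $P_S$ and $P_{Y_1 Y_2|X_1 X_2 S_1 S_2}$ into $P_{Y_1 Y_2|X_1 X_2 S}$, while the marginal of $P_{T U V X_2|X_1 S_1}$ under $V=X_2$ and the merging of $T$ is exactly $P_{X_2|X_1 S}$ (with an implicit auxiliary $U$ obeying $U \to (X_1,S) \to X_2$).

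Next I would simplify each bound of~\eqref{eqn44}--\eqref{eqn49} by repeated use of the chain rule. Bound~\eqref{eqn44} becomes $R_1 + R_{21} \le I(U,X_1;Y_1) - I(U;S|X_1)$, and discarding the nonnegative $R_{21}$ yields~\eqref{eqn58}. For the private rate I would start from~\eqref{eqn47}: here the whole of $W_2$ is decoded at receiver~2, so the entire auxiliary triple coalesces, $(T,U,V)\mapsto X_2$, giving $R_2 \le I(X_2;Y_2,S|X_1) - I(X_2;S|X_1)$; the identity $I(X_2;Y_2,S|X_1) = I(X_2;S|X_1) + I(X_2;Y_2|X_1,S)$ then collapses this to~\eqref{eqn59}. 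The sum bound~\eqref{eqn60} follows simply by adding~\eqref{eqn58} and~\eqref{eqn59} (it also drops out of~\eqref{eqn48} after the same reductions), since $R_2 = R_{21}+R_{22}$ and the two constraints are governed by the $Y_1$ and $Y_2$ links separately.

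The equivocation bound~\eqref{eqn61} requires the most care. Starting from~\eqref{eqn49} with $V=X_2$, $T$ absorbed into $U$, and $S_1 = S_2 = S$, I obtain $R_{e_2} \le I(X_2;Y_2,S|U,X_1) - \max\{I(X_2;S|U,X_1),\, I(X_2;Y_1|U,X_1)\}$. Distributing the subtraction across the maximum converts it into a minimum of two differences, and applying $I(X_2;Y_2,S|U,X_1) - I(X_2;S|U,X_1) = I(X_2;Y_2|U,X_1,S)$ to the first branch reproduces exactly the two terms of~\eqref{eqn61}.

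I expect the main obstacle to be the justification — rather than the algebra — of the auxiliary collapse: one must argue that taking $V = X_2$ and absorbing $T$ into $U$ is a legitimate and lossless restriction of the distribution class of Theorem~\ref{thm7}, that all right-hand sides remain nonnegative so that its achievability still applies, and that the induced joint law indeed factors as~\eqref{eqn62}. The intuition that ``a receiver observing the state need not be binned against'' is what makes $V=X_2$ lossless, but turning this into a clean statement within the coding scheme of Appendix~\ref{D} is the step deserving the most attention.
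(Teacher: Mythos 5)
Your proposal is correct and takes essentially the same route as the paper, whose entire proof is the one-line substitution $T = X_1$, $V = X_2$, $S_1 = S_2 = S$ into Theorem~\ref{thm7}; your alternative choice $T=\emptyset$ (absorbing $T$ into $U$) is functionally identical, since in every bound \eqref{eqn44}--\eqref{eqn49} the variable $T$ appears either conditioned on $X_1$ or jointly with $X_1$, so both substitutions produce the same mutual-information expressions, and your chain-rule reductions of \eqref{eqn44}, \eqref{eqn47} and \eqref{eqn49} match the corollary exactly. The only slips are minor asides: \eqref{eqn60} does \emph{not} drop out of \eqref{eqn48}, which under these substitutions reduces to the $Y_2$-side bound $R_1+R_2 \leq I(X_1;Y_2,S)+I(X_2;Y_2|X_1,S)$ rather than to \eqref{eqn60} (your primary derivation, summing \eqref{eqn58} and \eqref{eqn59}, is the right one), and the Markov restriction $U \to (X_1,S) \to X_2$ you posit is unnecessary and actually too strong --- Theorem~\ref{thm7} permits any joint conditional $P_{U X_2|X_1 S}$, and correlation between $U$ and $X_2$ given $(X_1,S)$ is exploited in the Gaussian dirty-paper choice of Theorem~\ref{thm9}.
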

\begin{proof}
The proof follows directly from Theorem~\ref{thm7}, by substituting~$T = X_1, V = X_2$ and~$S_1 = S_2 = S$.
\end{proof}

\subsection{Outer Bound}
Now, we provide an outer bound on the capacity of the CIC-PCSI, as follows.
\begin{theorem}[Outer bound 3]\label{thm8}
An outer bound on the capacity of the CIC-PCSI consists of the rate pairs~$(R_1,R_2)$ satisfying
\begin{eqnarray}{}\label{eqn63}
\!\!\!\!\!\!\!\!\!\!\!\!\!\!\!\!\!\!\!\! &&R_1             \leq I(T, U, X_1; Y_1) - I(T, U; S_1|X_1),\\\label{eqn64}
\!\!\!\!\!\!\!\!\!\!\!\!\!\!\!\!\!\!\!\! &&R_{2}           \leq I(T, V ; Y_2, S_2|X_1)-I(T, V; S_1|X_1),\\\label{eqn65}
\!\!\!\!\!\!\!\!\!\!\!\!\!\!\!\!\!\!\!\! &&R_1 + R_{2}     \leq I(T, V, X_1; Y_2, S_2) - I(T, V ; S_1|X_1),
\end{eqnarray}
for input distribution that factors as
\begin{eqnarray}{}\label{eqn57}
&&P_{X_1 S_1 S_2 T V X_2 Y_1 Y_2} =  P_{S_1} P_{S_2} P_{X_1} P_{T V X_2|X_1 S_1}\nonumber\\
&&\times P_{Y_1 Y_2|X_1 X_2 S_1 S_2}.
\end{eqnarray}
\end{theorem}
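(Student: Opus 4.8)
The plan is to establish this converse by the standard route --- Fano's inequality followed by single-letterization --- with the Csisz\'ar sum identity doing the essential work of turning the non-causal knowledge of $S_1$ at the cognitive encoder into the Gel'fand--Pinsker penalty terms. Throughout, I treat $(Y_2,S_2)$ as the genuine channel output at receiver~$2$, since that decoder has non-causal access to $S_2$ by~\eqref{eqn3}.

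First I would apply Fano's inequality to the decoders in~\eqref{eqn3}: since receiver~$1$ recovers $W_1$ from $Y_1^n$ and receiver~$2$ recovers $W_2$ from $(Y_2^n,S_2^n)$, we obtain $nR_1\le I(W_1;Y_1^n)+n\epsilon_n$, and, because $X_1^n=\varphi_{1,n}(W_1)$ is independent of $W_2$ and may be handed to receiver~$2$ as a genie, $nR_2\le I(W_2;Y_2^n,S_2^n|X_1^n)+n\epsilon_n$. I would then single-letterize each bound in the Gel'fand--Pinsker style: expand over $i$, introduce the future state block $S_{1,i+1}^n$, and collect the messages together with $S_{1,i+1}^n$ and the input $X_{1,i}$ into the auxiliaries $T_i,U_i$ (for the $R_1$ bound) and $T_i,V_i$ (for the $R_2$ bound). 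Applying the Csisz\'ar sum identity to the state terms converts $\sum_i I(S_{1,i+1}^n;Y_{1,i}|\cdots)$ into $\sum_i I(\cdots;S_{1,i}|S_{1,i+1}^n)$, which becomes the penalty $I(T,U;S_1|X_1)$, respectively $I(T,V;S_1|X_1)$, thereby yielding~\eqref{eqn63} and~\eqref{eqn64}; here $(Y_{2,i},S_{2,i})$ serves as the effective output at receiver~$2$.

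For the sum rate~\eqref{eqn65} I would give receiver~$2$ a genie that lets it decode $W_1$ jointly with $W_2$, so that $n(R_1+R_2)$ is controlled by a single receiver-$2$ quantity: the primary message then contributes $I(X_1;Y_2,S_2)$ and the cognitive message contributes $I(T,V;Y_2,S_2|X_1)$, and re-using the Csisz\'ar-sum step on $S_1$ turns the latter into $I(T,V;Y_2,S_2|X_1)-I(T,V;S_1|X_1)$; adding the $X_1$ term and recombining via the chain rule gives the claimed $I(T,V,X_1;Y_2,S_2)-I(T,V;S_1|X_1)$.

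The step I expect to be the main obstacle is arranging the auxiliary definitions so that the induced joint law factors exactly as~\eqref{eqn57}. That factorization forces $(T,V,X_2)$ to be conditionally independent of $S_2$ given $(X_1,S_1)$, so these auxiliaries must be assembled only from $W_1,W_2$, the future block $S_{1,i+1}^n$ and $X_{1,i}$, and must exclude any $S_2$-dependent quantity such as the past outputs $Y_2^{i-1}$; verifying this conditional independence, and verifying that the genie invoked for~\eqref{eqn65} does not break it, is where the real content of the proof lies. The remaining items --- discarding the non-negative Fano remainders and collapsing the per-letter bounds through a time-sharing variable --- are routine.
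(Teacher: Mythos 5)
Your overall skeleton (Fano's inequality plus Gel'fand--Pinsker single-letterization) is the same family of argument the paper relies on --- its proof is a one-line deferral to~\cite[Appendix F]{bibi82} --- but the way you resolve what you yourself call the main obstacle is backwards, and it breaks your own derivation. The factorization~\eqref{eqn57} does \emph{not} force the auxiliaries to exclude past outputs: since the state processes are i.i.d.\ and independent of $(W_1,W_2)$, the sample $S_{2,i}$ is independent of every time-$<i$ quantity jointly with $(X_{1,i},S_{1,i})$, so auxiliaries containing $Y_1^{i-1}$ or $(Y_2^{i-1},S_2^{i-1})$ still satisfy $(T,V,X_2,X_1,S_1)\perp S_2$, which is all that~\eqref{eqn57} demands. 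Conversely, the past outputs \emph{must} sit inside the auxiliaries for your Csisz\'ar-sum step to close: with, say, $U_i=(W,Y^{i-1},S_{1,i+1}^n)$ one has $I(U_i;S_{1,i})=I(Y^{i-1};S_{1,i}\mid W,S_{1,i+1}^n)$, and it is exactly this term that the Csisz\'ar sum identity cancels against $\sum_i I(S_{1,i+1}^n;Y_i\mid W,Y^{i-1})$, producing the single-letter penalties $I(T,U;S_1|X_1)$ and $I(T,V;S_1|X_1)$. If, as you prescribe, the auxiliaries are assembled only from $(W_1,W_2,S_{1,i+1}^n,X_{1,i})$, then $I(\mathrm{aux}_i;S_{1,i}\mid X_{1,i})=0$ (messages and future states are independent of $S_{1,i}$), the converted term $\sum_i I(\text{past outputs};S_{1,i}\mid\cdots)$ is stranded with nowhere to be absorbed, and you cannot even discharge the conditioning on $Y^{i-1}$ in $\sum_i I(W;Y_i\mid Y^{i-1})$. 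So~\eqref{eqn63} and~\eqref{eqn64} do not emerge from the construction as you specify it.

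The second gap is the sum rate~\eqref{eqn65}. Your genie step --- ``give receiver~2 a genie that lets it decode $W_1$ jointly with $W_2$'' --- is not a legitimate move: Fano's inequality applies only to messages a decoder is required to recover, i.e., $W_1$ at receiver~1 and $W_2$ at receiver~2 by~\eqref{eqn3}. A genie can add side information to an existing Fano bound, but it cannot manufacture $n(R_1+R_2)\le I(W_1,W_2;Y_2^n,S_2^n)+n\epsilon_n$ when receiver~2 is under no obligation to decode $W_1$; in a channel where $Y_2$ is useless for $W_1$ this inequality is simply false while $R_1>0$ remains achievable. The legitimate starting point is $n(R_1+R_2)\le I(W_1;Y_1^n)+I(W_2;Y_2^n,S_2^n\mid W_1)+2n\epsilon_n$, whose first term lives at receiver~1's output; transporting it to receiver~2's output, as the single-receiver form of~\eqref{eqn65} requires, needs an additional argument of the kind developed in~\cite[Appendix F]{bibi82}, to which the paper defers precisely this point. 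As written, your claimed contribution $I(X_1;Y_2,S_2)$ from the primary message is unsupported, so the proof of~\eqref{eqn65} has a genuine hole even if the auxiliary construction above is repaired.
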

\begin{proof}
The proof is similar to one taken by~\cite[Appendix F]{bibi82} by using the Fano's inequality.
\end{proof}

\subsection{The Gaussian Example}
In this section, we consider
the CIC with partial channel states (shown in Figure~\ref{fig:2}) with
the channel outputs as~\eqref{eqn37}. Similar to the cases considered in Section III-C, when~$a > 1$, i.~e., \textit{Strong Interference},
we have no secrecy. Thus, we consider the other case~$a \leq 1$. We provide the following theorem for the Gaussian
CIC-PCSI.
\begin{theorem}[Achievable equivocation-rate region]\label{thm9}
For the Gaussian CIC-PCSI, in the case that~$a \leq 1$, the achievable equivocation-rate region consists of the rate triples~$(R_1, R_2, R_{e_2})$
which satisfy~\eqref{eqn66}--\eqref{eqn70}, in the above of the page,
\begin{figure*}
\begin{eqnarray}{}\label{eqn66}
R_1      &\leq& \mathcal{C}\Big(\frac{P_1+a^2P_2+ K_1+K_2+1+ 2a\rho_1\sqrt{P_1 P_2}+ 2a\rho_2\sqrt{P_2 K_1}}{K_1(a^2 P_2^{''}+ K_2+1)}\Big),\\\label{eqn67}
R_2      &\leq& \mathcal{C}\big(P_2^{''}\big),\\\label{eqn69}
R_1+R_2  &\leq& \mathcal{C}\big(b^2P_1+ P_2+ K_1+ 2b\rho_1\sqrt{P_1 P_2}+ 2\rho_2\sqrt{P_2 K_1}\big) -\frac{1}{2}\log\big(K_1\big) ,\\\label{eqn70}
R_{e_2} &\leq&  \mathcal{C}\big(P_2^{''}\big)- \mathcal{C}\big(\frac{a^2P_2^{''}}{K_2+1}\big),
\end{eqnarray}
\end{figure*}
in which~$P_2^{''}=\rho P_2$ and~$0\leq \rho, \rho_1,\rho_2 \leq 1$.
\end{theorem}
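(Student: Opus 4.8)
The plan is to specialize the general superposition-coding region of Theorem~\ref{thm7} to the Gaussian channel~\eqref{eqn37} under a jointly Gaussian input assignment, using dirty-paper (Costa) coding to neutralize the transmitter-known state $S_1$ and direct cancellation for the receiver-known state $S_2$. First I would fix the joint law of $(X_1, X_2, S_1, S_2)$ to be Gaussian and compatible with the factorization~\eqref{eqn50}: take $S_1\sim\mathcal{N}(0,K_1)$, $S_2\sim\mathcal{N}(0,K_2)$ and $X_1\sim\mathcal{N}(0,P_1)$ mutually independent, and let $X_2$ (with $\mathrm{Var}(X_2)=P_2$) be correlated with $X_1$ through $E[X_1X_2]=\rho_1\sqrt{P_1P_2}$ and with the known state through $E[X_2S_1]=\rho_2\sqrt{P_2K_1}$, reserving power $P_2''=\rho P_2$ for a fresh confidential component. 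With these choices the received power at receiver~1 is exactly $\mathrm{Var}(Y_1)=P_1+a^2P_2+K_1+K_2+1+2a\rho_1\sqrt{P_1P_2}+2a\rho_2\sqrt{P_2K_1}$, which is the numerator of~\eqref{eqn66}.

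Next I would specify the auxiliaries and the rate split. Since the cognitive rate is to be protected, I would set $R_{21}=0$ so that all of $W_2$ is the confidential stream $W_{22}$ carried by $V$, assign $T$ and $U$ the dirty-paper forms that handle $S_1$ for $W_1$ and for the cooperative part of $X_2$, and take the confidential auxiliary in Costa form $V=X_{22}+\alpha S_1$ with $X_{22}\sim\mathcal{N}(0,P_2'')$ and $\alpha$ tuned to the effective noise at the cognitive receiver after it subtracts $S_2$. Because $S_2$ is known non-causally at the cognitive receiver, every term conditioned on $S_2$ in~\eqref{eqn44}--\eqref{eqn49} is evaluated as if $S_2$ were removed from $Y_2$, so $S_2$ does not degrade the cognitive link; at receiver~1, which does not know $S_2$, the state contributes its power $K_2$ to the effective noise, which is the source of the $K_2+1$ terms in the denominators of~\eqref{eqn66} and~\eqref{eqn70}.

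I would then read off each bound term by term. The private-rate bound~\eqref{eqn67}, $R_2\le\mathcal{C}(P_2'')$, follows from~\eqref{eqn45} with $R_{22}=R_2$: the Costa choice of $\alpha$ makes $I(V;S_1|U,X_1,T)$ cancel the state penalty and leaves the clean point-to-point rate at the cognitive receiver. The bound~\eqref{eqn66} on $R_1$ comes from~\eqref{eqn44} with $R_{21}=0$, the numerator being $\mathrm{Var}(Y_1)$ above and the denominator $K_1(a^2P_2''+K_2+1)$ collecting the undecoded confidential power $a^2P_2''$, the unknown state-plus-noise floor $K_2+1$, and the factor $K_1$ from the subtracted state-binning term $I(T,U;S_1|X_1)$. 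The sum-rate~\eqref{eqn69} comes from~\eqref{eqn48} evaluated at receiver~2, its argument being the variance of the cognitive receiver's observation after cancellation of $S_2$ and the $-\frac{1}{2}\log K_1$ being the state penalty $I(T,U,V;S_1|X_1)$. For the equivocation~\eqref{eqn70} I would use~\eqref{eqn49}: its first term gives $I(V;Y_2,S_2|U,X_1,T)=\mathcal{C}(P_2'')$, while the channel-leakage term evaluates to $I(V;Y_1|U,X_1,T)=\mathcal{C}\!\big(a^2P_2''/(K_2+1)\big)$, reflecting that the confidential signal reaches receiver~1 scaled by $a$ against the $S_2$-plus-noise floor $K_2+1$.

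The main obstacle is the joint tuning of the Costa parameter $\alpha$ with the correlation coefficients $(\rho,\rho_1,\rho_2)$ subject to a valid covariance matrix for $X_2$: a single assignment must simultaneously make $S_1$ transparent to the cognitive receiver (the clean $\mathcal{C}(P_2'')$ in~\eqref{eqn67} and~\eqref{eqn70}), produce the residual floor $K_1(a^2P_2''+K_2+1)$ in~\eqref{eqn66}, and keep the leakage at the claimed value. Two points require particular care. First, establishing that $S_1$ drops out of the leakage term $I(V;Y_1|U,X_1,T)$ under the conditioning on $(U,X_1,T)$, so that only the $K_2+1$ floor survives there, is the delicate step in obtaining~\eqref{eqn70}. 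Second, one must verify that for $a\le1$ the maximization in~\eqref{eqn49} is resolved by the channel-leakage term $I(V;Y_1|U,X_1,T)$ rather than by the state term $I(V;S_1|U,X_1,T)$, together with the non-negativity of the right-hand sides of~\eqref{eqn44}--\eqref{eqn49}; the weak-interference condition $a\le1$ is what should guarantee this direction of the maximum.
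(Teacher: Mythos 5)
Your overall route---specializing Theorem~\ref{thm7} to jointly Gaussian inputs with the covariance structure $E[X_1X_2]=\rho_1\sqrt{P_1P_2}$, $E[X_2S_1]=\rho_2\sqrt{P_2K_1}$, fresh confidential power $P_2''$, and dirty-paper coding against $S_1$---is indeed the paper's route, and your identification of the numerator of \eqref{eqn66} with the variance of $Y_1$ is correct. But the two steps you flag as ``obstacles'' are exactly where your plan breaks down, and neither is resolved by $a\le 1$. With your Costa-form confidential auxiliary $V=X_{22}+\alpha S_1$ (and $T,U$ in generic dirty-paper form, i.e., not revealing $S_1$), the state term $I(V;S_1|U,X_1,T)$ is a strictly positive quantity of order $\frac{1}{2}\log\big(1+\alpha^2K_1/P_2''\big)$; it depends on $\alpha$ and $K_1$, not on $a$, so the weak-interference condition gives you no handle on which argument attains the maximum in \eqref{eqn49}. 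Worse, the evaluation $I(V;Y_1|U,X_1,T)=\mathcal{C}\big(a^2P_2''/(K_2+1)\big)$ is false in general for your choice of $V$: since $V$ is correlated with $S_1$ and $S_1$ enters $Y_1$ additively, conditioning on $(U,X_1,T)$ leaves residual uncertainty about $S_1$ and the leakage term picks up an extra correlation contribution, so \eqref{eqn70} cannot simply be ``read off'' as you propose.

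The paper's proof dissolves both problems with an auxiliary assignment your proposal does not consider: it sets $T=(U,S_1)$ and $V=X_2$, with $U=X_2'+\alpha S_1$ and $X_2=X_2'+X_2''+\rho_1\sqrt{P_2/P_1}\,X_1+\rho_2\sqrt{P_2/K_1}\,S_1$, which is admissible because the factorization \eqref{eqn50} allows $T$ to depend on $S_1$. Since $S_1$ then sits inside the conditioning of every term in \eqref{eqn45} and \eqref{eqn49}, the state term $I(V;S_1|U,X_1,T)$ is identically zero, so the maximum in \eqref{eqn49} is the channel-leakage term by construction, with no appeal to $a\le 1$; and given $(U,X_1,S_1)$ the only residual randomness in $X_2$ is $X_2''$, so both receivers' conditional observations collapse to clean point-to-point Gaussian channels, giving $I(V;Y_2,S_2|U,X_1,T)=\mathcal{C}(P_2'')$ and $I(V;Y_1|U,X_1,T)=\mathcal{C}\big(a^2P_2''/(K_2+1)\big)$, i.e., exactly \eqref{eqn67} and \eqref{eqn70}. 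The missing idea in your proposal is therefore to embed $S_1$ in the auxiliary $T$ (equivalently, to condition the confidential layer on the full transmitter-side state) rather than to Costa-code the confidential stream $V$ against $S_1$; without it, the two delicate steps you correctly identified remain unproven and, as stated, would fail.
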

\begin{proof}
The proof is based on Theorem~\ref{thm7}, by substituting~$T = (U, S_1)$ and~$V = X_2$ and choosing the following jointly Gaussian distributions for the random variables:
\begin{eqnarray}{}\label{eqn76}
&&\!\!\!\!\!\! X_1\sim\mathcal{N}(0, P_1),~X_2\sim\mathcal{N}(0, P_2),\\\label{eqn77}
&&\!\!\!\!\!\! X_2= X_2^{'}+X_2^{''}+\rho_1\sqrt{\frac{P_2}{P_1}}X_1+ \rho_2\sqrt{\frac{P_2}{K_1}}S_1,\\\label{eqn78}
&&\!\!\!\!\!\! X_2^{'}\sim\mathcal{N}(0, P_2^{'}),~X_2^{''}\sim\mathcal{N}(0, P_2^{''}),\\\label{eqn79}
&&\!\!\!\!\!\! P_2^{'}+ P_2^{''} =(1-\rho_1^2-\rho_2^2)P_2,\\\label{eqn80}
&&\!\!\!\!\!\! U=X_2^{'}+\alpha S_1,
\end{eqnarray}
in which~$X_1, X_2^{'}, X_2^{''}$, and~$S_1, S_2$ are independent. Transmitter~2, splits its power into three parts:~$\rho_1^2P_2$, which
is used for cooperating with the primary transmitter sending~$W_1$;~$P_2^{'}+\rho_2^2P_2$,
which is used in dirty paper coding to deal with
the state at receiver~1 via an auxiliary random variable~$U$; and~$P^{''}_2$
which is used for transmitting~$W_2$.
The mutual information formulas in~\eqref{eqn66}-\eqref{eqn70} are calculated by the approach taken by~\cite{bibi82}.
\end{proof}

To compare the results of Theorem~\ref{thm9} with the achievable equivocation-rate of Theorem~\ref{thm5}, we consider
a simple case of Theorem~\ref{thm9} in which the cognitive transmitter uses all its power to send
its individual message. For this case we have the following corollary.
\begin{col}[Perfect Secrecy Condition]\label{col3}
For the Gaussian CIC-PCSI, in the case that~$a \leq 1$, the achievable secrecy rate region
consists the set of rates~$(R_1, R_2)$ which satisfies
\begin{eqnarray}{}\label{eqn661}
R_1      &\leq& \mathcal{C}\big(\frac{P_1+a^2P_2+ K_1+K_2+1}{K_1(a^2 P_2 + K_2+1)}\big),\\\label{eqn671}
R_{2}    &\leq& \mathcal{C}\big(P_2\big)- \mathcal{C}\big(\frac{a^2P_2}{K_2+1}\big),\\\label{eqn701}
\!\!\!\!\!\! R_1+R_2  &\leq& \mathcal{C}\big(b^2P_1+ P_2+ K_1\big) -\frac{1}{2}\log K_1.
\end{eqnarray}
\end{col}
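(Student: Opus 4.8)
The plan is to obtain this corollary as a direct specialization of Theorem~\ref{thm9}, in which the cognitive transmitter devotes its entire power budget to its own confidential message~$W_2$. Concretely, I would set the power-splitting parameters so that no power is spent on cooperation with the primary transmitter and none on pre-cancelling the state through the auxiliary variable~$U$; that is, take~$\rho_1=0$, $\rho_2=0$, and~$P_2^{'}=0$. By the power constraint~$P_2^{'}+P_2^{''}=(1-\rho_1^2-\rho_2^2)P_2$ of~\eqref{eqn79}, this forces~$P_2^{''}=P_2$, i.e.~$\rho=1$. With this admissible choice all the cross terms~$2a\rho_1\sqrt{P_1P_2}$, $2a\rho_2\sqrt{P_2K_1}$, $2b\rho_1\sqrt{P_1P_2}$ and~$2\rho_2\sqrt{P_2K_1}$ appearing in the mutual-information expressions of Theorem~\ref{thm9} vanish.

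Next I would substitute these values into the four bounds~\eqref{eqn66}--\eqref{eqn70}. The~$R_1$ bound~\eqref{eqn66} collapses immediately to~\eqref{eqn661}: its numerator loses the two cross terms and its denominator becomes~$K_1(a^2P_2+K_2+1)$ once~$P_2^{''}=P_2$. Likewise the sum-rate bound~\eqref{eqn69} reduces to~\eqref{eqn701}, since the~$\rho_1$ and~$\rho_2$ contributions drop out and leave~$\mathcal{C}(b^2P_1+P_2+K_1)-\tfrac{1}{2}\log K_1$. Both of these reductions are pure bookkeeping.

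The only substantive step concerns the~$R_2$ bound, and it is here that the label \emph{perfect secrecy} does the work. Perfect secrecy requires that the \emph{entire} individual rate be kept confidential at the primary receiver, i.e.~$R_2=R_{e_2}$; under this identification the operative constraint on~$R_2$ is not the reliability bound~\eqref{eqn67} but the equivocation bound~\eqref{eqn70}. Evaluating~\eqref{eqn70} at~$P_2^{''}=P_2$ yields~$R_2\le\mathcal{C}(P_2)-\mathcal{C}\!\big(a^2P_2/(K_2+1)\big)$, which is precisely~\eqref{eqn671}. I would note the consistency of this: at~$P_2^{''}=P_2$ the reliability bound~\eqref{eqn67} reads~$R_2\le\mathcal{C}(P_2)$, and the perfect-secrecy requirement simply tightens it by the leakage term~$\mathcal{C}\!\big(a^2P_2/(K_2+1)\big)$, which quantifies what receiver~1 can learn about~$W_2$ through the interfering link of gain~$a$ in the presence of the state~$S_2$ (of power~$K_2$) as additional interference.

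I do not anticipate a genuine obstacle, since the whole argument amounts to evaluating an already-proved achievable region at one admissible parameter point; the chosen values all lie in~$0\le\rho,\rho_1,\rho_2\le1$, so feasibility is automatic. The one point deserving care is conceptual rather than computational: recognizing that ``perfect secrecy'' identifies~$R_2$ with~$R_{e_2}$, so that~\eqref{eqn70}, and not~\eqref{eqn67}, determines the individual rate. With that identification made, the three inequalities~\eqref{eqn661}--\eqref{eqn701} follow, and the corollary is established.
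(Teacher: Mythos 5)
Your proposal is correct and takes essentially the same route as the paper's own proof: both specialize Theorem~\ref{thm9} at $\rho_1=\rho_2=0$, $\rho=1$ (equivalently $X_2^{'}=\emptyset$, $P_2^{''}=P_2$) and both invoke the perfect-secrecy condition $R_2\leq\min\{R_2,R_{e_2}\}$ so that the equivocation bound~\eqref{eqn70}, rather than the reliability bound~\eqref{eqn67}, determines the individual rate. Your explicit check that the cross terms vanish and that~\eqref{eqn66} and~\eqref{eqn69} reduce to~\eqref{eqn661} and~\eqref{eqn701} merely spells out the substitution the paper states in one line.
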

\begin{proof}
The proof is directly derived from Theorem~\ref{thm9} by considering the perfect secrecy condition in which~$R_{2}\leq \min\{R_{2}, R_{e_2}\}$, and
by substituting~$\rho_1=\rho_2=0$ and~$\rho=1, X_2^{'}=\emptyset$,
\end{proof}

\begin{remark}
Comparing the results of Corollary~\ref{col3} with the achievable equivocation-rate region of Theorem~\ref{thm5} shows that
the secrecy rate of~\eqref{eqn671} is higher than the one of~\eqref{eqn40}, because of~$K_2\geq 0$. It means that
the SPC approach achieves higher secrecy rate than the GPC in general case. Moreover, comparing~\eqref{eqn661} with~\eqref{eqn38} shows that for the case of~$a\leq a^{\dag}$, where
\begin{eqnarray}{}\label{eqn662}
a^{\dag}= \sqrt{\frac{(K_2+1)(P_1+K_1+K_2+1)- P_1K_1(K_2+1)}{P_1P_2K_1-P_2(K_2+1)}}\nonumber
\end{eqnarray}
the SPC approach obtains higher achievable rate for the primary transmitter than the GPC, and for the case of~$a> a^{\dag}$ vice versa.
Thus, in the case of~$a> a^{\dag}$, there is a trade off between the secrecy rate of the cognitive transmitter and the achievable rate of the
primary one. Figure~\ref{fig:3} shows the secrecy rate of the cognitive transmitter vs. the achievable rate of the primary one, using the GPC and the SPC approaches, and it illustrates the trade off between the~$R_2$ and~$R_1$ in the case of~$a> a^{\dag}$.
\end{remark}

\begin{figure*}
\centering
\epsfig{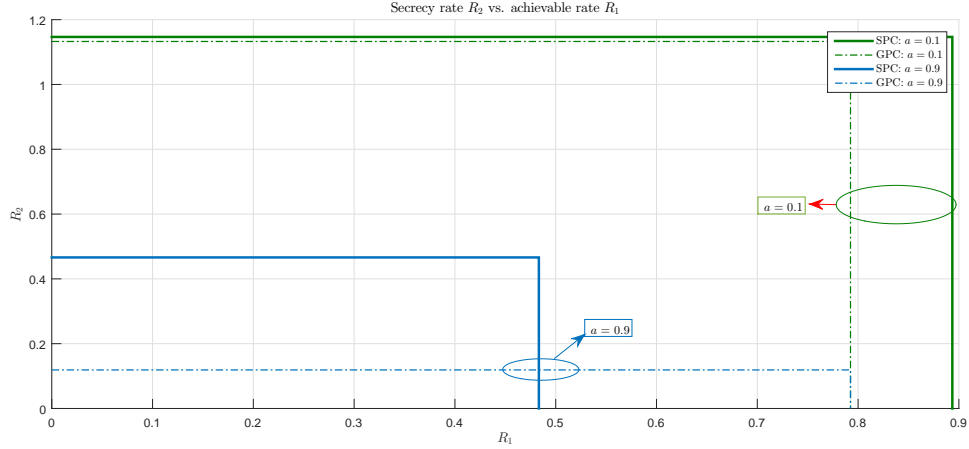}
\caption{The achievable equivocation-rate region of Theorem~\ref{thm5} (GPC),
and the achievable secrecy rate region of Theorem~\ref{thm9} (SPC)
for special values~$P_1= 4, P_2= 4, K_1= K_2= 1, b=0.3$ and $a=0.1, 0.9$. In this case~$a^{\dag}= 0.866$, and for~$a> a^{\dag}$,
there is a trade off between the secrecy rate of the cognitive transmitter and the achievable rate of the
primary one.}
\label{fig:3}
\end{figure*}

%**************************************************************
\section{Conclusions}\label{S5}

In this paper we studied the Cognitive Interference Channel
in which the partial channel state's information is available non-causally at the
cognitive transmitter and corresponding receiver.
Furthermore, the cognitive transmitter wishes to keep its message
confidential at the primary receiver, in addition to have a reliable
communication with its destination.
We use the Gel'fand-Pinsker coding (GPC) and the superposition
coding (SPC) to show that how the cognitive transmitter can
use the side information about the primary message and the channel state sequence to improve its achievable
rate and cooperate with the primary one. Therefore, we have derived the achievable equivocation-rate region for this channel
in two cases: by using GPC and SPC.
Moreover, in each case the outer bounds on the capacity and extension to a simple Gaussian example is presented.
In the Gaussian case, we consider a case in which the partial channel state sequences are additive and independent Gaussian
random variables, and it is shown that in some cases, there is a trade off between the secrecy rate of the cognitive transmitter
and the achievable rate of the primary one, using the GPC and GPC approaches. Thus, the cognitive transmitter can obtain the desired
region by choosing the proper coding scheme.

%*************************************************************************************************
\appendix
\section{Proof of Theorem~\ref{thm1}}\label{A}
\begin{proof}
The proof is established on the proof of~\cite[Theorem 1]{bibi13} and~\cite[Theorem 1]{bibi31}.
After introducing the code-book generation, and the encoding-decoding scheme, the proof of
Theorem~\ref{thm1} is presented in two steps. In step I, we prove the reliability of the rate region, i.~e.,
the condition under which the probability of error tends to zero for~$n\rightarrow\infty$. This step yields to the equations~\eqref{eqn10}-\eqref{eqn15}. In step II, we will calculate the equivocation to evaluate
the secrecy level. This step provides the equation~\eqref{eqn16}.

\textbf{Code-book generation:}
\begin{enumerate}

\item
For split rates~\eqref{eqn8}-\eqref{eqn9}, generate~$2^{nR_{1a}}$ codewords~$x_{1a}^n (w_{1a})$,~$w_{1a}\in \{1, 2, \ldots,2^{nR_{1a}}\}$, choosing~$x_{1a,n}^n (w_{1a})$ independently according to~$P_{X_{1a}}(\cdot)$.
\item For each~$w_{1a}$, generate~$2^{nR_{1b}}$ codewords~$x_{1b}^{n}(w_{1a}, w_{1b})$ using~$\Pi_{i=1}^{n} P_{X_{1b}\mid X_{1a}}(\cdot\mid x_{1a,i}(w_{1a}))$, where~$w_{1b}\in\{ 1,2, \ldots, 2^{nR_{1b}}\}$.
\item Over each pair $(w_{1a}, w_{1b})$ we generate~$x_{1}^{n}(w_{1a}, w_{1b})$ where~$x_{1}$ is a deterministic function of~$(x_{1a},x_{1b})$.
\item Generate $2^{n(R_{2b}+ R_{2b}^{'})}$ codewords $u^{n}(w_{2b}, b_{2b})$, $w_{2b}\in\{1,2,\ldots, 2^{nR_{2b}}\}$,~$b_{2b}\in\{1,2, \ldots, 2^{nR_{2b}^{'}}\}$ using~$P_{U}(\cdot)$.
\item For each $u^{n}(w_{2b}, b_{2b})$ generate $2^{n(R_{2a}+ R_{2a}^{'})}$ codewords $v^{n} (w_{2b}, b_{2b},w_{2a} ,b_{2a})$,
$w_{2a}\in\{1, 2, \ldots, 2^{nR_{2a}}\}, b_{2a}\in\{1, 2, \ldots, 2^{nR_{2a}^{'}}\}$ using
$\Pi_{n=1}^{n} P_{V\mid U}(\cdot\mid u_{n}(w_{2b},b_{2b}))$.
\item Now, define~$L_{1} = I(V;Y_{2}, S_{2}|U)-I(V;Y_{1}, X_{1}|U),~L_{2} = I(V;Y_{1},X_{1}|U)$.
Note that, here we assume that $R_{2a}>L_1\geq 0$, for the case $R_{2a}<L_1$
the similar coding scheme can be used to obtain the
\textit{perfect secrecy}, which is mentioned at the end of the proof. Let
\begin{eqnarray}{}\label{eqn82}
\mathcal{W}_{2a}=\mathcal{A}\times\mathcal{B}
\end{eqnarray}
where~$\mathcal{A}=\{1,2,\ldots,2^{n(R_{2a}-L_{1})}\}$ and~$\mathcal{B}=\{1,2,\ldots,2^{nL_{1}}\}$. Then, we define the mapping~$f:\mathcal{B}\rightarrow\mathcal{C}$ to partition~$\mathcal{B}$ into~$2^{nL_{1}}$ subsets with nearly equal size which means
\begin{equation}\label{eqn84}
\|f^{-1}(c_{1})\| \leq 2\|f^{-1}(c_{2})\| ~ \textrm{for~ each}~ c_{1},c_{2}\in \mathcal{C}.
\end{equation}

Now we define the mapping~$w_{2a}=(a,c)\rightarrow (a,b)$, in which~$b$ is chosen randomly from the set~$f^{-1}(c)\subset \mathcal{B}$.

\item Over each pair~$w_{1}$ and~$w_{2}(w_{2a}(a,c),w_{2b})$, we generate~$x_{1}^{n}$ and~$x_{2}^{n}(w_{1}, w_{2b}, w_{2a}(a,c), b_{2b}, b_{2a}, s_{1})$ where~$x_{2}$ is a deterministic function of~$(u, v, x_{1}, s_1)$.
\end{enumerate}

\textbf{Encoding:}
\begin{enumerate}

\item
Encoder~2 splits the~$nR_{2}$ bits~$w_{2}$ into~$nR_{2a}$ bits~$w_{2a}$ and~$nR_{2b}$ bits~$w_{2b}$. Similarly, it splits the~$nR_{1}$ bits~$w_{1}$ into~$nR_{1a}$ bits~$w_{1a}$ and~$nR_{1b}$ bits~$w_{1b}$. Thus,
\begin{eqnarray}{}\label{eqn85}
w_{2}=(w_{2a},w_{2b}),~~w_{1}=(w_{1a},w_{1b}).
\end{eqnarray}

\item  Encoder~2, finds a bin index~$b_{2b}$ such that~$(u^{n} (w_{2b}, b_{2b}),x_{1a}^{n}(w_{1a}),x_{1b}^{n} (w_{1a}, w_{1b}), s_{1}^{n})$
are jointly typical. If such a~$b_{2b}$ is not found, it chooses $b_{2b}=1$.

\item  For each~$(w_{2b}, b_{2b})$ and given~$s_{1}^{n}$ encoder~2 finds a bin index~$b_{2a}$ such that
$ (v^{n} (w_{2b},b_{2b}, w_{2a},b_{2a}), $ $u^{n} (w_{2b}, b_{2b}), x_{1a}^{n}(w_{1a}), x_{1b}^{n} (w_{1a}, w_{1b}), $ $ s_{1}^{n})$ are jointly typical.

\item  Transmitter~1 transmits~$x_{1}^{n} (w_{1a}, w_{1b})$.

\item  Transmitter~2 transmits $x_{2}^{n} (w_{1a}, w_{1b}, w_{2a}, b_{2a}, w_{2b}, b_{2b}, s_{1}^{n})$.
\end{enumerate}

\textbf{Decoding:}
\begin{enumerate}
\item  For given~$y_{1}^{n}$, decoder~1 chooses~$(\hat{w}_{1a}, \hat{w}_{1b}, \hat{w}_{2b}, \hat{b}_{2b})$ such that
$\Big(u^{n} (\hat{w}_{2b}, \hat{b}_{2b}), x_{1a}^{n} (\hat{w}_{1a}), x_{1b}^{n} (\hat{w}_{1a}, \hat{w}_{1b}), y_{1}^{n}\Big)$ are jointly typical. If there is no such quadruple it chooses~$(1,1,1,1)$.
\item
For given~$y_{2}^{n}$,~$s_{2}^{n}$ decoder~2 chooses $(\hat{w}_{2b}, \hat{b}_{2b}, \hat{w}_{2a}, \hat{b}_{2a})$ such that
$\Big(v^{n} (\hat{w}_{2b}, \hat{b}_{2b}, \hat{w}_{2a}, \hat{b}_{2a}), u^{n} (\hat{w}_{2b}, \hat{b}_{2b}), y_{2}^{n}, s_{2}^{n}\Big)$
are jointly typical. If there are more than one such quadruple, it chooses one of them. If there is not any quadruple, it chooses~$(1,1,1,1)$.
\end{enumerate}

\subsection{Step I: (Reliability) achievability of the rate region~\eqref{eqn10}--\eqref{eqn15}}
Reliability of the rate region~\eqref{eqn10}--\eqref{eqn15} will be proved here by analyzing the error probability.

\textbf{Error analysis:}
Using this scheme for coding and decoding, analysis of the error is derived following~\cite{bibi13}. First, we suppose that~$(w_{2a},w_{2b},w_{1a},w_{1b}) =(1, 1, 1, 1)$ is sent. An encoder error occurs in one of the following situations.

1-~$\mathcal{E}_{1}$: Encoder~2, cannot find a bin index~$b_{2b}$ such that~$(u^{n} (1, b_{2b}),x_{1a}^{n}(1),x_{1b}^{n} (1,1), s_{1}^{n})\in T_{\epsilon}^{(n)}(P_{U,X_{1a},X_{1b}, S_{1}})$ in which~$T_{\epsilon}^{(n)}(P_{XY})$ denotes the jointly~$\epsilon$-typical set with respect to~$P_{XY}$. It can be shown, by covering lemma~\cite{bibi69}, that for~$n\rightarrow\infty$ such~$b_{2b}$ exists with high probability if we have
\begin{eqnarray}{}\label{eqn86}
R^{'}_{2b}>I(U;X_{1a},X_{1b}, S_{1})+\delta,
\end{eqnarray}
in which~$\delta$ tends to zero as~$n\rightarrow \infty$~\cite{bibi13}.

2-~$\mathcal{E}_{2}$: After finding $b_{2b}=1$, encoder~2 cannot find~$b_{2a}$ such that
$(v^{n} (1,1,1,b_{2a}), u^{n} (1, 1), x_{1a}^{n} (1), x_{1b}^{n} (1, 1), s_{1}^{n})\in T_{\epsilon}^{(n)}(P_{U, V,X_{1a},X_{1b}, S_{1}})$.
It can be shown~\cite{bibi13} that for~$n\rightarrow\infty$, such~$b_{2a}$ exists with high probability if we have
\begin{eqnarray}{}\label{eqn87}
R^{'}_{2a}> I(V;X_{1a},X_{1b}, S_{1}|U)+\delta.
\end{eqnarray}

Now, we should compute the probabilities of the error events at the decoder, which are shown in TABLE~\ref{tab:1}. The second column of the table shows the corresponding bounds of the rates, which can be shown, make the error probability of each event tend to zero, as~$n\rightarrow\infty$. For more details about the derivation of the bounds proposed in TABLE~\ref{tab:1} see~\cite{bibi13}.
Using these bounds, the achievability of~\eqref{eqn10}--\eqref{eqn15} are proved.

\begin{table*}[tbp]
\centering
\caption{Error events in joint decoding and corresponding rate bounds}
\begin{tabular}[tbp]{| l | l | l |}\hline
& Error event & Arbitrarily small positive error probability if \\\hline
$E_{1}$ & $(\hat{w}_{2b}\neq 1, \hat{w}_{2a}= 1)$                             & $R_{2b}+ R_{2b}^{'} \leq I(U, V; Y_{2}, S_{2})$\\\hline
$E_{2}$ & $(\hat{w}_{2b}= 1, \hat{w}_{2a}\neq 1)$                             & $R_{2a}+ R_{2a}^{'} \leq I(V; Y_{2}, S_{2}| U)$\\\hline
$E_{3}$ & $(\hat{w}_{2b}\neq 1, \hat{w}_{2a}\neq 1)$                          & $R_{2b} + R_{2b}^{'} + R_{2a}+ R_{2a}^{'} \leq I(U, V; Y_{2}, S_{2})$\\\hline
$E_{1}^{'}$ & $(\hat{w}_{1a}\neq 1, \hat{w}_{1b}= 1, \hat{w}_{2b}^{'}= 1)$    & $R_{1a} \leq I(X_{1a}, X_{1b}; U, Y_{1})$\\\hline
$E_{2}^{'}$ & $(\hat{w}_{1a}\neq 1, \hat{w}_{1b}\neq 1, \hat{w}_{2b}^{'}= 1)$ & $R_{1a}+ R_{1b} \leq I(X_{1a}, X_{1b}; U, Y_{1})$\\\hline
$E_{3}^{'}$ & $(\hat{w}_{1a}\neq 1, \hat{w}_{1b}= 1, \hat{w}_{2b}^{'}\neq 1)$ & $R_{1a}+R_{1b} + R_{1b}^{'} \leq I(U, X_{1a}, X_{1b}; Y_{1})+ I(U; X_{1a}, X_{1b})$\\\hline
$E_{4}^{'}$ & $(\hat{w}_{1a}\neq 1, \hat{w}_{1b}\neq 1, \hat{w}_{2b}^{'}\neq 1)$ & $R_{1a}+R_{1b}+R_{2b} + R_{2b}^{'} \leq I(U, X_{1a}, X_{1b}; Y_{1})+ I(U; X_{1a}, X_{1b})$\\\hline
$E_{5}^{'}$ & $(\hat{w}_{1a}= 1, \hat{w}_{1b}\neq 1, \hat{w}_{2b}^{'}= 1)$    & $R_{1b} \leq I(X_{1b}; Y_{1}, U| X_{1a})$\\\hline
$E_{6}^{'}$ & $(\hat{w}_{1a}= 1, \hat{w}_{1b}\neq 1, \hat{w}_{2b}^{'}\neq 1)$ & $R_{1b}+ R_{2b} + R_{2b}^{'} \leq I(X_{1b}, U; Y_{1}| X_{1a})+ I(U; X_{1a}, X_{1b})$\\\hline
\end{tabular}
\label{tab:1}
\end{table*}

\subsection{Step II: (Security) achievability of the equivocation-rate region~\eqref{eqn16}}

In this step, the achievability of the equivocation-rate region~\eqref{eqn16} will be driven. To this purpose, we compute the equivocation.

\textbf{Equivocation-rate calculation:}
To prove~\eqref{eqn16}, for the equivocation-rate $R_{e_2}$, we follow the proof reported in~\cite{bibi31,bibi8,bibi9}.
We establish computing of the equivocation for the cognitive transmitter as follows.
\begin{eqnarray}{}\label{eqn88}
&&H(W_{2a}, W_{2b}\mid Y_{1}^{n})\nonumber\\
&\geq& H(W_{2a}, W_{2b}\mid Y_{1}^{n}, W_{1}, W_{2b})\nonumber\\
&=& H(W_{2a},Y_{1}^{n}\mid W_{1}, W_{2b})- H(Y_{1}^{n}\mid W_{1}, W_{2b})\nonumber\\
&=& H(W_{2a}, Y_{1}^{n}, V^{n}\mid W_{1}, W_{2b})\nonumber\\
&&  -H(V^{n}\mid W_{2a},W_{1}, W_{2b}, Y_{1}^{n})- H(Y_{1}^{n}\mid W_{1}, W_{2b})\nonumber\\
&=& H(W_{2a}, V^{n}\mid W_{1}, W_{2b})\nonumber\\
&&  + H(Y_{1}^{n}\mid W_{1},W_{2b},W_{2a},V^{n})\nonumber\\
&&  -H(V^{n}\mid W_{2a},W_{1}, W_{2b}, Y_{1}^{n})-H(Y_{1}^{n}\mid W_{1}, W_{1b})\nonumber\\
&\stackrel{(a)}{\geq}& H(V^{n}\mid W_{1},W_{2b})+ H(Y_{1}^{n}\mid V^{n},U^{n}, X_{1}^{n})\nonumber\\
&&  -H( V^{n}\mid W_{2a},W_{1},W_{2b},Y_{1}^{n})- H(Y_{1}^{n} \mid W_{1},W_{2b})\nonumber\\
&&
\end{eqnarray}
where~$(a)$ is because of the fact that given~$V^{n}$,~$W_{2a}$ is uniquely determined and~$Y_{1}^{n}$ is independent of~$(W_{1},W_{2b},W_{2a})$ given~$(V^{n},U^{n}, X_{1}^{n})$. Now, we bound each term in r.h.s. of~\eqref{eqn88}. For the first term in~\eqref{eqn88}, we have
\begin{eqnarray}{}\label{eqn89}
&&H(V^{n}\mid W_{1},W_{2b})\nonumber\\
&\stackrel{(b)}{\geq}& H(V^{n}\mid U^{n},X_{1}^{n})\nonumber\\
&\geq& H(V^{n}\mid U^{n},X_{1}^{n})-H(V^{n}\mid U^{n},Y_{2}^{n},S_{2}^{n})\nonumber\\
&=&I(V^{n};Y_{2}^{n},S_{2}^{n} \mid U^{n})-I(V^{n};X_{1}^{n}\mid U^{n})\nonumber\\
&\stackrel{(c)}{\geq}& n[I(V;Y_{2}, S_{2}\mid U)-I(V;X_{1}\mid U)]
\end{eqnarray}
where~$(b)$ is derived by using the data processing
inequality~\cite{bibi34}, which implies that~$V^{n}$ is independent of~$(W_{1},W_{2b})$
given~$(U^{n},X_{1}^{n})$, and~$(c)$ is derived using the approach taken in~\cite[Lemma 3]{bibi15}.
For the second term in the r.h.s of~\eqref{eqn88} we follow the related equations in~\cite{bibi9} and obtain
\begin{equation}\label{eqn90}
\frac{1}{n} H(Y_{1}^{n}\mid V^{n},U^{n}, X_{1}^{n})\geq H(Y_{1}\mid V,U,X_{1}, S_{1})-\epsilon_{1},
\end{equation}
where~$\epsilon_{1}$ is negligible for~$n\rightarrow\infty$. To compute the third term in the r.h.s of~\eqref{eqn88}, similar to~\cite[Lemma 2]{bibi9}, by using Fano's inequality we obtain
\begin{eqnarray}{}\label{eqn91}
\frac{1}{n}H(V^{n}\mid W_{2a},W_{1},W_{2b},Y_{1}^{n})< \epsilon_{2}
\end{eqnarray}
where~$\epsilon_{2}$ is negligible, when~$n\rightarrow\infty$. To compute the fourth term in~\eqref{eqn88}, first we define
\begin{eqnarray}{}\label{eqn92}
\hat{y}_{1}^{n}=
\left\{\begin{array}{ll}
y_{1}^{n}&\textrm{if} ~~\Big(u^{n}({w}_{2b}, {b}_{2b}), x_{1a}^{n}(w_{1a}), \\
& x_{1b}^{n}(w_{1a}, w_{1b}), y_{1}^{n}\Big)\in T_{\epsilon}^{(n)}(P_{UX_{1}Y_{1}})\\
z^{n}&\textrm{Otherwise} \\
\end{array} \right.
\end{eqnarray}
where~$z^{n}$ is an arbitrary sequence that is contained in $\mathcal{Y}_{1}^n$. Now, we have
\begin{eqnarray}{}\label{eqn93}
\!\!\!\!\!\!\!\!\!\!&&\frac{1}{n}H(Y_{1}^{n}\mid W_{1},W_{2b})\nonumber\\
\!\!\!\!\!\!\!\!\!\!&=&   \frac{1}{n}\sum_{w_{1},w_{2b}}[Pr\{W_{1}=w_{1},W_{2b}=w_{2b}\} \nonumber\\
\!\!\!\!\!\!\!\!\!\!&&H(Y_{1}^{n}\mid W_{1}=w_{1},W_{2b}=w_{2b})]\nonumber\\
\!\!\!\!\!\!\!\!\!\!&\leq&\frac{1}{n}\sum_{w_{1},w_{2b}}[Pr\{W_{1}=w_{1},W_{2b}=w_{2b}\} \nonumber\\
\!\!\!\!\!\!\!\!\!\!&&H(\hat{Y}_{1}^{n},Y_{1}^{n}\mid W_{1}=w_{1},W_{2b}=w_{2b})]\nonumber\\
\!\!\!\!\!\!\!\!\!\!&=&   \frac{1}{n}\sum_{w_{1},w_{2b}} Pr\{W_{1}=w_{1},W_{2b}=w_{2b}\} \nonumber\\
\!\!\!\!\!\!\!\!\!\!&&\Big[H(\hat{Y}_{1}^{n}\mid W_{1}=w_{1},W_{2b}=w_{2b})\nonumber\\
\!\!\!\!\!\!\!\!\!\!&&+H(Y_{1}^{n}\mid W_{1}=w_{1},W_{2b}=w_{2b},\hat{Y}_{1}^{n})\Big]
\end{eqnarray}

For the first term in ~\eqref{eqn93} we can write
\begin{eqnarray}{}\label{eqn94}
\!\!\!\!\!\!\!\!\!\!&&\frac{1}{n}    \sum_{w_{1},w_{2b}}Pr\{W_{1}=w_{1},W_{2b}=w_{2b}\}\nonumber\\
\!\!\!\!\!\!\!\!\!\!&&H(\hat{Y}_{1}^{n}\mid W_{1}=w_{1},W_{2b}=w_{2b})\nonumber\\
\!\!\!\!\!\!\!\!\!\!&\stackrel{(d)}\leq&\frac{1}{n}\sum_{w_{1},w_{2b}}Pr\{W_{1}=w_{1},W_{2b}=w_{2b}\}\nonumber\\
\!\!\!\!\!\!\!\!\!\!&&\times \log| T_{\epsilon}^{(n)}(P_{Y_{1}\mid U,X_{1}})|\nonumber
\end{eqnarray}
\begin{eqnarray}
&\leq&           \sum_{w_{1},w_{2b}}Pr\{W_{1}=w_{1},W_{2b}=w_{2b}\} \nonumber\\
&&\times [H(Y_{1}\mid U,X_{1})+\epsilon_{3}]\nonumber\\
&\leq& H(Y_{1}\mid U,X_{1})+\epsilon_{3},
\end{eqnarray}
where~$(d)$ is based on AEP~\cite{bibi34}, and~$\epsilon_{3}$ is negligible for~$n\rightarrow\infty$. To bound the second term in the r.h.s of~\eqref{eqn93}, we use Fano's inequality and obtain
\begin{eqnarray}{}\label{eqn95}
&&\frac{1}{n}      \sum_{w_{1},w_{2b}} Pr\{W_{1}=w_{1},W_{2b}=w_{2b}\}\nonumber\\
&&H(Y_{1}^{n}\mid W_{1}=w_{1},W_{2b}=w_{2b},\hat{Y}_{1}^{n})\nonumber\\
&\leq&\frac{1}{n}\sum_{w_{1},w_{2b}}   Pr\{W_{1}=w_{1},W_{2b}=w_{2b}\}\nonumber\\
&&\Big(1+Pr\{Y_{1}^{n}\neq\hat{Y}_{1}^{n}\mid W_{1}=w_{1},W_{2b}=w_{2b}\}\nonumber\\
&&\times \log{|\mathcal{Y}_{1}|^{n}}\Big)\nonumber\\
&=&\frac{1}{n}+  \log|\mathcal{Y}_{1}|\sum_{w_{1},w_{2b}} Pr\{W_{1}=w_{1},W_{2b}=w_{2b}\}\nonumber\\
&& Pr\{Y_{1}^{n}\neq \hat{Y}_{1}^{n}\mid W_{1}=w_{1},W_{2b}=w_{2b}\}\nonumber\\
&\leq&\epsilon_{4},
\end{eqnarray}
where~$\epsilon_{4}$ is negligible for~$n\rightarrow\infty$. Hence, from~\eqref{eqn94} and~\eqref{eqn95}, the forth term of the r.h.s. of~\eqref{eqn88} is bounded as
\begin{eqnarray}{}\label{eqn96}
\frac{1}{n} H(Y_{1}^{n}\mid W_{1},W_{2b})
\leq H(Y_{1}\mid U,X_{1})+\epsilon_{5},
\end{eqnarray}
in which $\epsilon_{5}$ tends to zero for~$n\rightarrow\infty$. Substituting~\eqref{eqn89},~\eqref{eqn90},~\eqref{eqn91} and~\eqref{eqn96} into~\eqref{eqn88}, we obtain
\begin{eqnarray}{}\label{eqn97}
\frac{1}{n}&H&(W_{2a},W_{2b}\mid Y_{1}^{n})\nonumber\\
&\geq& I(V;Y_{2}, S_{2}, U)-I(V;X_{1}, U) \nonumber\\
&&    +H(Y_{1}\mid V,U,X_{1}, S_{1})- H(Y_{1}\mid U, X_{1})-\epsilon_{6}\nonumber\\
&\geq& I(V; Y_{2}, S_{2}, U)-I(V, S_{1};X_{1},U) \nonumber\\
&&    -I(Y_1; V, S_1| U, X_1)-\epsilon_{6}\nonumber\\
&=&    I(V;Y_{2},S_{2}, U)-I(V, S_{1}; X_{1},Y_{1}, U)-\epsilon_{6}\nonumber\\
&&
\end{eqnarray}
where~$\epsilon_{6}$ is negligible for~$n\rightarrow\infty$. Regard to the definition of~$R_{e_2}$ in~\eqref{eqn5}-\eqref{eqn6} we conclude
\begin{eqnarray}{}\label{eqn98}
R_{e_2} \leq I(V;Y_{2},S_{2}, U)-I(V, S_{1};X_{1},Y_{1},U).
\end{eqnarray}
and therefore~\eqref{eqn16} is proved.
\end{proof}

\section{Proof of Theorem~\ref{thm3}}\label{B}
\begin{proof}[Proof of Theorem~\ref{thm3}]
For a quadruple code~$(M_1, M_2, n, P_e)$ for the CIC-PCSI, we consider the outer bound on~$R_1$ proposed
in~\eqref{eqn24}. Using the Fano's inequality we have
\begin{eqnarray}{}\label{eqn99}
nR_{1} &\leq& I(W_{1}; Y_{1}^{n})\nonumber\\
&=&    \sum_{i=1}^{n} I(W_{1}; Y_{1, i} | Y_{1, i+1}^{n})\nonumber\\
&\leq& \sum_{i=1}^{n} I(W_{1}, Y_2^{i-1} , Y_{1,i+1}^{n}; Y_{1,i})\nonumber\\
&\stackrel{(e)}{=}& \sum_{i=1}^{n} I(W_{1}, U_{i}; Y_{1,i}),
\end{eqnarray}
where~$(e)$ is derived by substituting~$U_{i}=(Y_{2}^{i-1}, Y_{1, i+1}^{n})$. Then, by substituting~$V_{1,i}= W_{1}$, the outer bound on~$R_1$ is derived. Similarly, we have
\begin{eqnarray}{}\label{eqn100}
nR_{1} &\leq& I(W_{1}; Y_{1}^{n})\nonumber\\
&\leq& \sum_{i=1}^{n} I(W_{1}; Y_2^{i-1} , Y_{1,i+1}^{n}, Y_{1,i})\nonumber\\
&\stackrel{(f)}{=}& \sum_{i=1}^{n} I(W_{1}; Y_{1,i}, U_{i}),
\end{eqnarray}
where~$(f)$ is derived by substituting~$U_{i}=(Y_{2}^{i-1}, Y_{1, i+1}^{n})$.
Thus, the outer bound on~$R_1$ is proved. The outer bound for~$R_2$ is derived as follows:

\begin{eqnarray}{}\label{eqn101}
  nR_{2} &\leq& I(W_{2}; Y_{2}^{n}|S_{1}^{n},S_{2}^{n})\nonumber\\
&=& \sum_{i=1}^{n} I(W_{2}, Y_{1, i+1}^n; Y_{2, i}| Y_{2}^{i-1}, S_{1}^{n},S_{2}^{n})\nonumber\\
&&- I(Y_{1, i+1}^n; Y_{2}^{i}| Y_{2}^{i-1}, S_{1}^{n},S_{2}^{n}, W_{2})\nonumber\\
&=& \sum_{i=1}^{n} I(W_{2}; Y_{2, i}| Y_{2}^{i-1},  Y_{1, i+1}^n, S_{1}^{n},S_{2}^{n})\nonumber\\
&&+ I(Y_{1, i+1}^n; Y_{2, i}| Y_{2}^{i-1}, S_{1}^{n},S_{2}^{n})\nonumber\\
&&- I(Y_{1, i+1}^n; Y_{2, i}| Y_{2}^{i-1}, S_{1}^{n},S_{2}^{n}, W_{2})\nonumber\\
&\stackrel{(g)}{=}& \sum_{i=1}^{n} I(W_{2}; Y_{2, i}| Y_{2}^{i-1},  Y_{1, i+1}^n, S_{1}^{n},S_{2}^{n})\nonumber\\
&&+ I(Y_{2}^{i-1}; Y_{1, i}, W_2| Y_{1, i+1}^{n}, S_{1}^{n},S_{2}^{n})\nonumber\\
&&- I(W_2; Y_{2}^{i-1}| Y_{1, i+1}^{n}, Y_{1, i}, S_{1}^{n},S_{2}^{n})\nonumber\\
&& - I(Y_{2}^{i-1}; Y_{1, i}| Y_{1, i+1}^{n}, S_{1}^{n},S_{2}^{n}, W_{2})\nonumber
\end{eqnarray}
\begin{eqnarray}
&=& \sum_{i=1}^{n} I(W_{2}; Y_{2}^{i}| Y_{1, i+1}^{n}, S_{1}^{n},S_{2}^{n})\nonumber\\
&& -I(W_{2}; Y_{2}^{i-1}| Y_{1, i}^{n}, S_{1}^{n},S_{2}^{n})\nonumber\\
&\leq& \sum_{i=1}^{n} I(W_{2}; Y_{2}^{i}| Y_{1, i+1}^{n}, S_{1}^{n},S_{2}^{n})\nonumber\\
&=& \sum_{i=1}^{n} I(W_{2}, U_{i}; Y_{2,i}| S_{1}^{n},S_{2}^{n}),
\end{eqnarray}
where~$(g)$ is derived by Csisz\'{a}r sum identity~\cite{bibi69}. Then, by substituting~$V_{1,i}= W_{1}$ and~$V_{2,i}= W_{2}$, the outer bound on~$R_2$ is derived. From Fano's inequality~\cite[Chapter 7]{bibi34} we have
\begin{eqnarray}{}\label{eqn102}
&&n(R_1+R_2)\nonumber\\
&\leq&                 I(W_{1}; Y_{1}^{n})+ I(W_{2}; Y_{2}^{n}|S_{1}^{n}, S_{2}^{n})\nonumber\\
&\stackrel{(h)}{\leq}& I(W_{1}; Y_{1}^{n})+I(W_{2}; Y_{2}^{n}| W_{1},S_{1}^{n}, S_{2}^{n})\nonumber\\
&=&\sum_{i=1}^{n}      I(W_{1}; Y_{1, i}| Y_{1, i+1}^{n}) \nonumber\\
&&+ I(W_2; Y_{2}^{i}| W_{1}, Y_{1, i+1}^{n},S_{1}^{n}, S_{2}^{n})\nonumber\\
&&-\Big[I(W_{2}, Y_{1,i} ;Y_{2}^{i-1}|W_1, Y_{1,i}^{n})\nonumber\\
&&- I(Y_{1,i} ;Y_{2}^{i-1}|W_1, Y_{1,i}^{n})\Big]\nonumber\\
&\stackrel{(i)}{\leq}&\sum_{i=1}^{n} I(W_{1}, Y_{2}^{i-1}; Y_{1,i}| Y_{1, i+1}^{n}) \nonumber\\
&&+ I(W_2; Y_{2,i}| W_{1},S_{1}^{n}, S_{2}^{n}, U_i)\nonumber\\
&&- I(Y_{1,i} ;Y_{2}^{i-1}|W_2, W_1, Y_{1,i+1}^{n})\nonumber\\
&\leq&                 I(W_1, U_i; Y_{1,i})\nonumber\\
&&+ I(W_2;Y_{2,i}|W_{1},S_{1}^{n}, S_{2}^{n}, U_i),
\end{eqnarray}
where~$(h)$ is since that~$W_2$ is independent of~$W_1$ and~$(i)$ is derived by substituting~$U_{i}=(Y_{2}^{i-1}, Y_{1, i+1}^{n})$.
Similarly, we have
\begin{eqnarray}{}\label{eqn103}
&&n(R_1+R_2)\nonumber\\
&\leq& I(W_{1}; Y_{1}^{n}|W_2)+I(W_{2}; Y_{2}^{n}| S_{1}^{n}, S_{2}^{n}) \nonumber\\
&\stackrel{(j)}{\leq}&\sum_{i=1}^{n}  I(W_{1}; Y_{1, i}| U_{i}, W_2)+ I(W_2, U_{i}; Y_{2,i}| S_{1}^{n}, S_{2}^{n}),\nonumber\\
&&
\end{eqnarray}
and~$(j)$ is derived by using the same approach as~\eqref{eqn102}. 
Finally, for the equivocation-rate region~$R_{e_{2}}$, we derive
the outer bound, using the approach taken in~\cite{bibi31}, as follows:
\begin{eqnarray}{}\label{eqn104}
n R_{e_2}&\leq& H(W_{2}| Y_{1}^{n})\nonumber\\
&=&     H(W_{2})-I(W_{2};Y_{1}^{n})\nonumber\\
&=&     I(W_{2};Y_{2}^{n})-I(W_{2};Y_{1}^{n})+H(W_{2}| Y_{2}^{n})\nonumber\\
&\stackrel{(k)}{\leq}& I(W_{2};Y_{2}^{n})- I(W_{2};Y_{1}^{n})+2n\epsilon_{n},
\end{eqnarray}
where~$(k)$ is derived from the \textit{Channel Coding Theorem}~\cite[Chapter 7]{bibi34} which implies that
in a reliable communication, the entropy of~$W_2$ given~$Y_{2}^{n}$ is
less than~$\epsilon_{n}$ which is negligible as~$n\rightarrow\infty$.
Then, we have
\begin{eqnarray}{}\label{eqn106}
I(W_{2};Y_{1}^{n}) &=& \sum_{i=1}^{n}I(W_{2};Y_{1,i}    \mid Y_{1, i+1}^{n})\nonumber\\
&=& \sum_{i=1}^{n}I(W_{2};Y_{1,i}    \mid Y_2^{i-1}, Y_{1, i+1}^{n})\nonumber\\
&&+ I(Y_{2}^{i-1};Y_{1,i}\mid Y_{1, i+1}^{n})\nonumber\\
&&-I(Y_{2}^{i-1};Y_{1, i}\mid Y_{1, i+1}^{n},W_{2}).
\end{eqnarray}
Therefore, for~\eqref{eqn104} we have
\begin{eqnarray}{}\label{eqn108}
nR_{e_2}&\leq&\sum_{i=1}^{n}\Big[I(W_{2},Y_{1, i+1}^n; Y_{2,i}\mid Y_{2}^{i-1})\nonumber\\
&&- I(Y_{1, i+1}^n; Y_{2,i}\mid Y_{2}^{i-1}, W_2)\Big]\nonumber\\
&&     -\sum_{i=1}^{n}I(W_{2};Y_{1,i}\mid Y_{2}^{i-1},Y_{1, i+1}^{n})\nonumber\\
&&- I(Y_{2}^{i-1}; Y_{1,i}\mid Y_{1, i+1}^{n})\nonumber\\
&& + I(Y_{2}^{i-1}; Y_{1,i}\mid Y_{1, i+1}^{n}, W_2)\nonumber\\
&\stackrel{(l)}{\leq}& \sum_{i=1}^{n} I(W_{2}; Y_{2,i}\mid Y_{2}^{i-1}, Y_{1, i+1}^n) \nonumber\\
&&- I(W_2;  Y_{1,i}\mid Y_{2}^{i-1}, Y_{1,i+1}^n),
\end{eqnarray}
where~$(l)$ is derived from the Csisz\'{a}r sum identity~\cite{bibi69} which implies that
$\sum_{i=1}^{n}I(Y_{2}^{i-1}; Y_{1,i}\mid Y_{1, i+1}^{n}, W_2) = \sum_{i=1}^{n} I(Y_{1, i+1}^{n}; Y_{2,i}\mid Y_{2, i-1}^{n}, W_2)$,
and the non-negativity of the mutual information function. Similarly, it can be shown that
\begin{eqnarray}{}\label{eqn109}
nR_{e_2}&\leq& \sum_{i=1}^{n} I(W_{2}; Y_{2,i}\mid Y_{2}^{i-1}, Y_{1, i+1}^n, W_1) \nonumber\\
&& - I(W_2;  Y_{1,i}\mid Y_{2}^{i-1}, Y_{1,i+1}^n, W_1).
\end{eqnarray}
Now, by substituting~$U_{i}=(Y_{2}^{i-1}, Y_{1, i+1}^{n})$,~$V_{1,i}= W_{1}$,~$V_{2,i}= W_{2}$, the proof is completed.
\end{proof}

\section{Proof of Theorem~\ref{thm4}}\label{C}

\begin{proof}[Proof of Theorem~\ref{thm4}]
For~$R_1$, using Fano's inequality we have
\begin{eqnarray}{}\label{eqn110}
nR_{1} &\leq& I(W_{1}; Y_{1}^{n})\nonumber\\
&\leq& I(W_{1}, W_2; Y_{1}^{n})\nonumber\\
&\leq& H(Y_{1}^{n})- H(Y_{1}^{n}| W_1, W_2)\nonumber\\
&\leq& H(Y_{1}^{n})- H(Y_{1}^{n}| W_1, W_2, X_{1}^{n}, X_{2}^{n})\nonumber\\
&\stackrel{(m)}{\leq}& H(Y_{1}^{n})- H(Y_{1}^{n}| X_{1}^{n}, X_{2}^{n})\nonumber\\
&\leq& I(X_{1}^{n}, X_{2}^{n}; Y_{1}^{n}),
\end{eqnarray}
where~$(m)$ is due to the fact that~$Y^n_1$ is independent of~$(W_1,W_2)$ given~$(X_1^n ,X_2^n )$. Now, from Fano's inequality
we have
\begin{eqnarray}{}\label{eqn111}
nR_{2}
&\leq& I(W_{2}; Y_{2}^{n}| S_1^n, S_2^n)\nonumber\\
&\leq& I(W_2; Y_{2}^{n}|W_1, S_1^n, S_2^n)\nonumber\\
&\leq& I(W_2, X_2^n; Y_{2}^{n}|W_1, X_1^n (W_1), S_1^n, S_2^n)\nonumber\\
&=&    H(Y_{2}^{n}|W_1, X_1^n, S_1^n, S_2^n)\nonumber\\
&&- H(Y_{2}^{n}|W_1, W_2, X_1^n, X_2^n, S_1^n, S_2^n)\nonumber\\
&\stackrel{(n)}{\leq}& H(Y_{2}^{n}| X_1^n, S_1^n, S_2^n)\nonumber\\
&&- H(Y_{2}^{n}| X_1^n, X_2^n, S_1^n, S_2^n)\nonumber\\
&=& I(X_{2}^{n}; Y_{2}^{n} | X_1^n, S_1^n, S_2^n),
\end{eqnarray}
where~$(n)$ is because of the fact that conditioning does not increase the entropy function and~$Y^n_2$ is independent
of~$(W_1,W_2)$ given~$(X^n_1 ,X^n_2 , S^n_1 , S^n_2 )$.
Now, for~$R_1+ R_2$, from Fano's inequality we have
\begin{eqnarray}{}\label{eqn112}
&&n(R_{1}+R_2)\nonumber\\
&\leq& I(W_2; Y_2^{n}| S_1^n, S_2^n)+ I(W_1; Y_1^n) \nonumber\\
&\leq& I(W_2; Y_2^{n}| W_1, S_1^n, S_2^n)+ I(W_1; Y_1^n) \nonumber\\
&\leq& I(W_2; Y_2^{n}, Y_1^{'n}| W_1, S_1^n, S_2^n)+ I(W_1; Y_1^{n}) \nonumber\\
&\leq& I(W_2; Y_1^{'n}| W_1, S_1^n, S_2^n)\nonumber\\
&&+ I(W_2; Y_2^{n}| Y_1^{'n}, W_1, S_1^n, S_2^n)+ I(W_1; Y_1^{n}) \nonumber\\
&=& H(Y_1^{'n}|W_1, S_1^n, S_2^n)- H(Y_1^{'n}|W_1, W_2, S_1^n, S_2^n)\nonumber\\
&&+ H(Y_2^{n}| Y_1^{'n}, W_1, S_1^n, S_2^n) \nonumber\\
&&- H(Y_2^{n}| Y_1^{'n}, W_1, W_2, S_1^n, S_2^n) + H(Y_1^n)\nonumber\\
&&- H(Y_1^n|W_1) \nonumber\\
&\stackrel{(o)}{\leq}& -H(Y_1^{'n}| W_1, W_2, S_1^n, S_2^n)\nonumber\\
&&+ H(Y_2^{n}| Y_1^{'n}, W_1, S_1^n, S_2^n) \nonumber\\
&&    -H(Y_2^{n}| Y_1^{'n}, W_1, W_2, S_1^n, S_2^n) + H(Y_1^n) \nonumber\\
&\stackrel{(p)}{\leq}&-H(Y_1^{'n}| W_1, W_2, X_1, X_2, S_1^n, S_2^n) \nonumber\\
&&+    H(Y_2^{n}| Y_1^{'n}, W_1,X_1, S_1^n, S_2^n) \nonumber\\
&&-    H(Y_2^{n}| Y_1^{'n}, W_1, W_2, X_1, X_2, S_1^n, S_2^n) + H(Y_1^n)\nonumber
\end{eqnarray}
\begin{eqnarray}
&\leq&-H(Y_1^{'n}| X_1^n, X_2^n, S_1^n, S_2^n)\nonumber\\
&&+ H(Y_2^{n}| Y_1^{'n}, X_1^n, S_1^n, S_2^n) \nonumber\\
&&    -H(Y_2^{n}| Y_1^{'n}, X_1^n, X_2^n, S_1^n, S_2^n) + H(Y_1^n) \nonumber\\
&\leq& I(Y_1^{n}; X_1^n, X_2^n, S_1^n, S_2^n)\nonumber\\
&&+ I(Y_2^{n} ; X_2^n| Y_1^{'n}, X_1^n, S_1^n, S_2^n),
\end{eqnarray}
where~$(o)$ is because that~$H(Y_1^{'n}| W_1, S_1^n, S_2^n)- H(Y_1^{n}| W_1) \leq 0$, and~$(p)$ is due to the fact the conditioning
does not increase the entropy. Thus, the proof is completed.
\end{proof}

\section{Proof of Theorem~\ref{thm7}}\label{D}

\begin{proof}[Proof of Theorem~\ref{thm7}]
To derive the equivocation-rate region~\eqref{eqn44}--\eqref{eqn49}, first we should propose the code-book generation and
the encoding-decoding schemes.

\textbf{Code-book generation:}
\begin{enumerate}
\item
Generate~$2^{nR_1}$ codewords~$x^n_1 (w_1), w_1 \in\{1, 2, \ldots , 2^{nR_1}\}$, choosing~$x^n_1 (w_1)$ independently according to~$P_{X_1} (.)$.
\item
For each~$x^n_1 (w_1)$, generate~$2^{n\tilde{R}_1}$ codewords~$t^n(w_1, v_1)$ using~$\prod_{i=1}^n P_{T|X_1}(.|x^n_1 (w_1))$, where~$v_1 \in
\{1, 2, \ldots 2^{n\tilde{R}_1}\}$.
\item

For each~$x^n_1 (w_1)$ and~$t^n(w_1, v_1)$, generate~$u^n(w_1, v_1,w_{21}, v_{21})$ with i.i.d components based on~$P_{U|X_1 T}$, in
which~$w_{21} \in \{1, 2, \ldots 2^{nR_{21}}\}$ and~$v_{21} \in \{1, 2, \ldots 2^{n\tilde{R}_{21}}\}$.
\item
For each~$x^n_1 (w_1)$,~$t^n(w_1, v_1)$ and~$u^n(w_1, v_1,w_{21}, v_{21})$ generate~$v^n(w_1, v_1, w_{21}, v_{21}, w_{22}, v_{22})$ with i.i.d components based on~$P_{V|X_1 T U}$, in
which~$w_{22} \in \{1, 2, \ldots 2^{nR_{22}}\}$ and~$v_{22} \in \{1, 2, \ldots 2^{n\tilde{R}_{22}}\}$.
\item
Now, distribute~$v^n$ sequences randomly to~$2^{nR}$ bin such that each bin contains~$2^{nM}$ sequences, where~$
R =R_{22}-M$ and~$M = \max\{I(V ; S_1|U,X_1, T), I(V ; Y_1|U,X_1, T)\}$. Then, index each bin by~$j \in \{1, 2,\ldots , 2^{nR}\}$.
Next, partition~$2^{nM}$ sequences in every bin into~$2^{n[M-I(V ;Y_1|U,X_1,T )]}$ subbin each subbin
contains~$2^{nI(V ;Y_1|U,X_1,T)}$ sequences. Index each subbin by~$a \in \{1, 2, \ldots, $ $2^{n[M-I(V ;Y_1|U,X_1,T )]}\}$ and let~$A$ be the random variable to
represent the index of the subbin, and let~$B$ be the random variable to represent the index of the sequences in
each subbin.
\end{enumerate}

\textbf{Encoding:}
Define~$\mathcal{A} = {1, 2, \ldots ,A}$ and~$B = {1, 2, \ldots,B}$ where~$A$ and~$B$ are defined before.
Let~$\mathcal{W}_{22} =\mathcal{A} \times \mathcal{C}$ where~$\mathcal{C} = \{1, 2, \ldots ,B\}$.
Now, define the mapping~$g : \mathcal{B}\rightarrow \mathcal{C}$ to map~$\mathcal{B}$ into~$\mathcal{C}$ subsets with nearly equal
size. Encoder~1 for given~$w_1$, transmits~$x^n_1 (w_1)$.
Encoder~2 for given~$w_1, x^n_1 (w_1)$ and~$s^n_1$, chooses~$t^n(w_1, v_1)$ such
that~$(t^n(w_1, v_1), x^n_1 (w_1), s^n_1 )\in T^{(n)}_{\epsilon}(P_{T,X_1,S_1} )$.
For given~$w_{21}$ and~$t^n(w_1, v_1)$ it chooses~$u^n(w_1, v_1,w_{21}, v_{21})$ such
that~$(u^n, t^n, x^n_1 , s^n_1 )\in T^{(n)}_{\epsilon} (P_{U, T, X_1, S_1})$.
Next, for given~$w_{22}$, it uses the mapping~$w_{22} = (a, c)\rightarrow (a, b)$ which~$b$ is chosen randomly from
the set~$g^{-1}(c)\subset \mathcal{B}$. Then, it chooses~$v^n(w_1, v_1, w_{21}, v_{21},w_{22}(a, b), v_{22})$
such that~$(v^n, u^n, t^n, x^n_1 , s^n_1 ) \in T^{(n)}_{\epsilon}(P_{V, U, T, X_1, S_1})$. Finally, it transmits~$x^n_2 (v^n, u^n, t^n, x^n_1 , s^n_1)$.

\textbf{Decoding:}
Decoder~1, given~$y^n_1$, finds~$(\hat{w}_1, \hat{v}_1, \hat{w}_{21}, \hat{v}_{21})$ such that
$(u^n(\hat{w}_1, \hat{v}_1, \hat{w}_{21}, \hat{v}_{21}), t^n(\hat{w}_1,
\hat{v}_1), x^n_1(\hat{w}_1) , s^n_2 )\in T^{(n)}_{\epsilon}(P_{U, T, X_1, S_2})$.
Decoder~2, given~$y^n_2$ and~$s^n_2$, finds~$(\hat{w}_1, \hat{v}_1, \hat{w}_{21}, \hat{v}_{21}, \hat{w}_{22}(a, b), \hat{v}_{22})$ such that
$(v^n(\hat{w}_1, \hat{v}_1, \hat{w}_{21}, \hat{v}_{21}, \hat{w}_{22}(a, b), \hat{v}_{22}), u^n(\hat{w}_1, \hat{v}_1, \hat{w}_{21}, \hat{v}_{21}),$ $ t^n(\hat{w}_1, \hat{v}_1) , x^n_1(\hat{w}_1) , s^n_2 )
\in T^{(n)}_{\epsilon}(P_{V, U, T, X_1, S_2
})$.

\textbf{Error analysis:} First, fix the channel joint distribution as~\eqref{eqn57}.
The error analysis is similar to the one presented in~\cite{bibi82}.
Thus, the equations~\eqref{eqn44}--\eqref{eqn48} are derived by combining these results.

\textbf{Equivocation-rate calculation:} The equivocation of the~$W_2$ at receiver~1 is calculated as follows:
\begin{eqnarray}{}\label{eqn113}
&& H(W_2|Y^n_1) \nonumber\\
&=& H(W_2, Y^n_1) - H(Y^n_1)\nonumber\\
&=& H(W_2, Y^n_1, A, W_1) - H(Y^n_1)- H(A, W_1| W_2, Y^n_1)\nonumber\\
&=& H(W_2, A, W_1, Y^n_1, V^n) - H(V^n| W_2, A, W_1, Y^n_1)\nonumber\\
&&- H(Y^n_1)- H(A, W_1| W_2, Y^n_1)\nonumber\\
&=& H(W_2, A, W_1| Y^n_1, V^n) + H(Y^n_1, V^n)- H(Y^n_1)\nonumber\\
&&- H(V^n| W_2, A, W_1, Y^n_1)- H(A, W_1| W_2, Y^n_1)\nonumber\\
&\stackrel{(q)}{\geq}& H(V^n| Y^n_1)- H(V^n| W_2, A, W_1, Y^n_1)\nonumber\\
&&- H(A, W_1| W_2, Y^n_1)\nonumber\\
&\stackrel{(r)}{\geq}& H(V^n| Y^n_1, U^n, X_1^n, T^n)- H(V^n| W_2, A, W_1, Y^n_1)\nonumber\\
&&-\log|\mathcal{A}|- H(V^n| Y^n_2, S_2^n, U^n, X_1^n, T^n)\nonumber\\
&\stackrel{(s)}{\geq}& n\Big[I(V; Y_2, S_2|U, X_1, T)-I(V; Y_1|U, X_1, T)\Big]\nonumber\\
&&- H(V^n| W_2, A, W_1, Y^n_1)\nonumber\\
&& - \Big[\max\{I(V; S_1|U, X_1, T), I(V; Y_1|U, X_1, T)\}\nonumber\\
&&- I(V; Y_1| U, X_1, T)\Big]\nonumber\\
&\geq& n\Big[I(V; Y_2, S_2|U, X_1, T)\nonumber\\
&&- \max\{I(V; S_1|U, X_1, T), I(V; Y_1|U, X_1, T)\}\Big],
\end{eqnarray}
where~$(q)$ follows from the non-negativity of entropy function;~$(r)$ follows from the fact that conditioning does
not increase the entropy, the non-negativity of entropy function, and the fact that
$H(A, W_1| W_2, Y^n_1)= H(A| W_2, Y^n_1)+ H(W_1|A, W_2, Y^n_1) \leq H(A)\leq \log|\mathcal{A}|$, thanks to~$H(W_1|A, W_2, Y^n_1)=0$;
$(s)$ is because of Fano's inequality which implies that the term~$H(V^n| W_2, A, W_1, Y^n_1)$ tends to zero for~$n\rightarrow\infty$
(see~\cite{bibi70}). The proof is completed.
\end{proof}

\ack
This work was partially supported by Iran National Science Foundation (INSF), under contracts' numbers 91/s/26278 and 92/32575.
Parts of this paper was presented in the International Symposium on Information Theory and Application (ISITA) 2010.

The authors gratefully acknowledge the anonymous reviewers for their suggestions,
comments and corrections as well as those of the associate editor.

%=======================================================================================================================================

\end{document}